\newtheorem{theorem}{Theorem}
\newtheorem{proposition}{Proposition}
\newtheorem{lemma}{Lemma}
\newtheorem{assumption}{Assumption}
\newcommand{\PGamma}{P_\Gamma}
\newcommand{\PGammastar}{P_{\Gamma^\star}}
\newcommand{\fGamma}{f_\Gamma}
\newcommand{\fGammac}{f_{\Gamma^c}}
\newcommand{\fGammastar}{f_{\Gamma^\star}}
\newcommand{\fGammastarc}{f_{\Gamma^{\star c}}}
\newcommand{\FGamma}{F_\Gamma}
\newcommand{\FGammac}{F_{\Gamma^c}}
\newcommand{\FGammastar}{F_{\Gamma^\star}}
\newcommand{\FGammastarc}{F_{\Gamma^{\star c}}}
\newcommand{\RR}{\mathbb{R}}
\newcommand{\sdiff}{\triangle}
\newcommand{\veta}{\boldsymbol{\veta}}
\renewcommand{\phi}{\varphi} 
\newcommand{\eps}{\varepsilon}
\DeclareMathOperator{\diam}{diam}
\theoremstyle{plain}
\title{Robust and rate-optimal Gibbs posterior inference on the boundary of a noisy image}
\author{
Nicholas Syring\footnote{Department of Mathematics, Statistics, and Computer Science, University of Illinois at Chicago} \quad and \quad Ryan Martin\footnote{Department of Statistics, North Carolina State University, {\tt rgmarti3@ncsu.edu}}
}
\date{\today}
\begin{document}
\maketitle
\begin{abstract}
Detection of an image boundary when the pixel intensities are measured with noise is an important problem in image segmentation.
From a statistical point of view, the challenge is that likelihood-based methods require modeling the pixel intensities inside and outside the image boundary, even though these are typically of no practical interest.  Since misspecification of the pixel intensity models can negatively affect inference on the image boundary, it would be desirable to avoid this modeling step altogether.  Towards this, we develop a robust Gibbs approach that constructs a posterior distribution for the image boundary directly, without modeling the pixel intensities.  We prove that 
the Gibbs posterior concentrates asymptotically at the minimax optimal rate, adaptive to the boundary smoothness.  Monte Carlo computation of the Gibbs posterior is straightforward, and simulation results show that the corresponding inference is more accurate than that based on existing Bayesian methodology.
	
\smallskip

\emph{Keywords and phrases:} Adaptation; boundary detection; likelihood-free inference; model misspecification; posterior concentration rate.  

\end{abstract}

\section{Introduction}
\label{S:intro}

In image analysis, the boundary or edge of the image is one of the most important features of the image, and extraction of this boundary is a critical step.  An image consists of pixel locations and intensity values at each pixel, and the boundary can be thought of as a curve separating pixels of higher intensity from those of lower intensity.  Applications of boundary detection are wide-ranging, e.g., \citet{malik.2004} use boundary detection to identify important features in pictures of natural settings, \citet{Li.2010} identifies boundaries in medical images, and in \citet{yuan.2016} boundary detection helps classify the type and severity of wear on machines.  For images with noiseless intensity, boundary detection has received considerable attention in the applied mathematics and computer science literature; see, e.g., \citet{Ziou.Tabbone.1998}, \citet{Maini.Aggarwal.2009}, \citet{Li.2010}, and \citet{anam.2013}.  However, these approaches suffer from a number of difficulties.  First, they can produce an estimate of the image boundary, but do not quantify estimation uncertainty.  Second, these methods use a two-stage approach where the image is first smoothed to filter out noise and then a boundary is estimated based on a calculated intensity gradient.  This two-stage approach makes theoretical analysis of the method challenging, and no convergence results are presently known to the authors.  Third, in our examples, these methods perform poorly on noisy data, and we suspect one reason for this is that the intensity gradient is less informative for the boundary when we observe noisy data.  In the statistics literature, \citet{gu.etal.2015} take a Bayesian approach to boundary detection and emphasize borrowing information to recover boundaries of multiple, similar objects in an image.  Boundary detection using wombling is also a popular approach; see \citet{Liang.etal.2009}, with applications to geography \citep{Lu.Carlin.2004}, public health \citep{Ma.Carlin.2007}, and ecology \citep{Fitz.etal.2010}.  However, these techniques are used with areal or spatially aggregated data and are not suitable for the pixel data encountered in image analysis. 

In Section~\ref{S:setup}, we give a detailed description of the image boundary problem, following the setup in \citet{Li.Ghosal.2015}.  They take a fully Bayesian approach, modeling the probability distributions of the pixel intensities both inside and outside the image.  This approach is challenging because it often introduces nuisance parameters in addition to the image boundary.  Therefore, in Section~\ref{S:main}, we propose to use a so-called \emph{Gibbs model}, where a suitable loss function is used to connect the data to the image boundary directly, rather than a probability model \citep[e.g.,][]{Catonib, zhang.2006, jiang.tanner.2008, syring.martin.mcid, walker.2016}. 

We investigate the asymptotic convergence properties of the Gibbs posterior, in Section~\ref{S:post_concentration}, and we show that, if the boundary is $\alpha$-H\"older smooth, then the Gibbs posterior concentrates around the true boundary at the rate $\{(\log n)/n\}^{\alpha / (\alpha + 1)}$, which is minimax optimal \citep{mammen.1995} up to the logarithmic factor, relative to neighborhoods of the true boundary measured by the Lebesgue measure of a symmetric difference.  Moreover, as a consequence of appropriately mixing over the number of knots in the prior, this rate is adaptive to the unknown smoothness $\alpha$.  To our knowledge, this is the first Gibbs posterior convergence rate result for an infinite-dimensional parameter, so the proof techniques used herein may be of general interest.  Further, since the Gibbs posterior concentrates at the optimal rate without requiring a model for the pixel intensities, we claim that the inference on the image boundary is robust.  

Computation of the Gibbs posterior is relatively straightforward and, in Section~\ref{S:computation}, we present a reversible jump Markov chain Monte Carlo method; $\mathtt{R}$ code to implement to the proposed Gibbs posterior inference is available at \url{https://github.com/nasyring/GibbsImage}.  A comparison of inference based on the proposed Gibbs posterior to that based on the fully Bayes approach in \citet{Li.Ghosal.2015} is shown in Section~\ref{S:Simulations}.  For smooth boundaries, the two methods perform similarly, providing very accurate estimation.  However, the Gibbs posterior is easier to compute, thanks to there being no nuisance parameters, and is notably more accurate than the Bayes approach when the model is misspecified or when the boundary is not everywhere smooth.
The technical details of the proofs are deferred to Appendix~\ref{S:proofs}.

\section{Problem formulation}
\label{S:setup}

Let $\Omega \subset \RR^2$ be a bounded region that represents the frame of the image; typically, $\Omega$ will be a square, say, $[-\frac12, \frac12]^2$, but, generally, we assume only that $\Omega$ is scaled to have unit Lebesgue measure.  Data consists of pairs $(X_i, Y_i)$, $i=1,\ldots,n$, where $X_i$ is a pixel location in $\Omega$ and $Y_i$ is an intensity measurement at that pixel.  The range of $Y_i$ is context-dependent, and we consider both binary and real-valued cases below.  The model asserts that there is a region $\Gamma \subset \Omega$ such that the intensity distribution is different depending on whether the pixel is inside or outside $\Gamma$.  We consider the following model for the joint distribution $\PGamma$ of pixel location and intensity, $(X,Y)$: 
\begin{align}
X & \sim g(x), \notag \\
Y \mid (X=x) & \sim \fGamma(y) \, 1(x \in \Gamma) + \fGammac(y) \, 1(x \in \Gamma^c), \label{eq:gamma}
\end{align}
where $g$ is a density on $\Omega$, $\fGamma$ and $\fGammac$ are densities on the intensity space, $\FGamma$ and $\FGammac$ are their respective distribution functions, and $1(\cdot)$ denotes an indicator function.  That is, given the pixel location $X=x$, the distribution of the pixel intensity $Y$ depends only on whether $x$ is in $\Gamma$ or $\Gamma^c$.  Of course, more complex models are possible, e.g., where the pixel intensity distribution depends on the specific pixel location, but we will not consider such models here.  We assume that there is a true, star-shaped set of pixels, denoted by $\Gamma^\star$, with a known reference point in its interior.  That is, any point in $\Gamma^\star$ can be connected to the reference point by a line segment fully contained in $\Gamma^\star$.  The observations $\{(X_i, Y_i): i=1,\ldots,n\}$ are iid samples from $\PGammastar$, and the goal is to make inference on $\Gamma^\star$ or, equivalently, its boundary $\gamma^\star = \partial \Gamma^\star$.  

The density $g$ for the pixel locations is of no interest and is taken to be known.  The question is how to handle the two conditional distributions, $\fGamma$ and $\fGammac$.  \citet{Li.Ghosal.2015} take a fully Bayesian approach, modeling both $\fGamma$ and $\fGammac$.  By specifying these parametric models, they are obligated to introduce priors and carry out posterior computations for the corresponding parameters.  Besides the efforts needed to specify models and priors and to carry out posterior computations, there is also a concern that the pixel intensity models might be misspecified, potentially biasing the inference on $\Gamma^\star$.  Since the forms of $\fGamma$ and $\fGammac$, as well as any associated parameters, are irrelevant to the boundary detection problem, it is natural to ask if inference can be carried out robustly, without modeling the pixel intensities.  

We answer this question affirmatively, developing a Gibbs model for $\Gamma$ in Section~\ref{S:main}.  In the present context, suppose we have a loss function $\ell_\Gamma(x,y)$ that measures how well an observed pixel location--intensity pair $(x,y)$ agrees with a particular region $\Gamma$.  The defining characteristic of $\ell_\Gamma$ is that $\Gamma^\star$ should be the unique minimizer of $\Gamma \mapsto R(\Gamma)$, where $R(\Gamma) = \PGammastar \ell_\Gamma$ is the risk, i.e., the expected loss.  A main contribution here, in Section~\ref{SS:loss}, is specification of a loss function that meets this criterion.  A necessary condition to construct such a loss function is that the distribution functions $\FGamma$ and $\FGammac$ are stochastically ordered.  Imagine a gray-scale image; then, stochastic ordering means this image is lighter, on average, inside the boundary than outside the boundary, or vice-versa.  In the specific context of binary images the pixel densities $\fGammastar$ and $\fGammastarc$ are simply numbers between 0 and 1, and the stochastic ordering assumption means that, without loss of generality, $\fGammastar > \fGammastarc$, while for continuous images, again without loss of generality, $\FGammastar(y) < \FGammastarc(y)$ for all $y \in \RR$.  If we define the empirical risk, 
\begin{equation}
\label{eq:empirical.risk}
R_n(\Gamma) = \frac1n \sum_{i=1}^n \ell_\Gamma(X_i, Y_i), 
\end{equation}
then, given a prior distribution $\Pi$ for $\Gamma$, 
the Gibbs posterior distribution for $\Gamma$, denoted by $\Pi_n$, has the formula
\begin{equation}
\label{eq:gibbs}
\Pi_n(B) = \frac{\int_B e^{-n R_n(\Gamma)} \, \Pi(d\Gamma)}{\int e^{-n R_n(\Gamma)} \, \Pi(d\Gamma)}, 
\end{equation}
where $B$ is a generic $\Pi$-measurable set of $\Gamma$'s.  Of course, \eqref{eq:gibbs} only makes sense if the denominator is finite; in Section~\ref{S:main}, our risk functions $R_n$ are non-negative so this integrability condition holds automatically.  

Proper scaling of the loss in the Gibbs model is important \citep[e.g.,][]{walker.2016, syring.martin.gibbs}, and here we will provide some context-specific scaling; see Sections~\ref{S:post_concentration} and \ref{SS:scaling}.  Our choice of prior $\Pi$ for $\Gamma$ is discussed in Section~\ref{SS:prior}.  
Together, the loss function $\ell_\Gamma$ and the prior for $\Gamma$ define the Gibbs model, no further modeling is required.

\section{Gibbs model for the image boundary}
\label{S:main}

\subsection{Loss function}
\label{SS:loss}

To start, we consider inference on the image boundary when the pixel intensity is binary, i.e., $Y_i \in \{-1,+1\}$.  In this case, the densities, $\fGamma$ and $\fGammac$, in \eqref{eq:gamma} must be Bernoulli, so the likelihood is known.  Even though the parametric form of the conditional distributions is known, the Gibbs approach only requires prior specification and posterior computation related to $\Gamma$, whereas the Bayes approach must also deal with the nuisance parameters in these Bernoulli conditional distributions.  The binary case is relatively simple and will provide insights into how to formulate a Gibbs model in the more challenging continuous intensity problem.   

A reasonable choice for the loss function $\ell_\Gamma$ is the following weighted misclassification error loss, depending on a parameter $h>0$:
\begin{equation}
\label{eq:mce_w}
\ell_\Gamma(x,y) = \ell_\Gamma(x,y \mid h) = h \, 1(y=+1, x \in \Gamma^c) + 1(y=-1, x \in \Gamma). 
\end{equation}
Note that putting $h=1$ in \eqref{eq:mce_w} gives the usual misclassification error loss.  In order for the Gibbs model to work the risk, or expected loss, must be minimized at the true $\Gamma^\star$ for some $h$.  Picking $h$ to ensure this property holds necessitates making a connection between the probability model in \eqref{eq:gamma} and the loss in \eqref{eq:mce_w}.  The condition in \eqref{eq:h.bound} below is just the connection needed.  With a slight abuse of notation, let $\fGammastar$ and $\fGammastarc$ denote the conditional probabilities for the event $Y=+1$, given $X \in \Gamma^\star$ and $X \in \Gamma^{\star c}$, respectively.  Recall our stochastic ordering assumption implies $\fGammastar > \fGammastarc$.  

\begin{proposition}
\label{prop:min.binary}
Using the notation in the previous paragraph, if $h$ is such that 
\begin{equation}
\label{eq:h.bound}
\fGammastar > \frac{1}{1+h} > \fGammastarc, 
\end{equation}
then the risk function $R(\Gamma) = \PGammastar \ell_\Gamma$ is minimized at $\Gamma^\star$.
\end{proposition}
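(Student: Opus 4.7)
My plan is to expand $R(\Gamma) = \PGammastar \ell_\Gamma$ directly from the definition of the loss and the joint distribution, then compare it to $R(\Gamma^\star)$ by partitioning $\Omega$ according to how $\Gamma$ relates to $\Gamma^\star$. Using the law of total expectation via the conditional model in \eqref{eq:gamma}, I would write
\[
R(\Gamma) = h \, \PGammastar(Y=+1,\, X\in\Gamma^c) + \PGammastar(Y=-1,\, X\in\Gamma),
\]
and then insert $\PGammastar(Y=+1\mid X=x) = \fGammastar 1(x\in\Gamma^\star) + \fGammastarc 1(x\in\Gamma^{\star c})$. This turns $R(\Gamma)$ into a sum of four integrals of $g$ over the regions $\Gamma\cap\Gamma^\star$, $\Gamma\cap\Gamma^{\star c}$, $\Gamma^c\cap\Gamma^\star$, and $\Gamma^c\cap\Gamma^{\star c}$, with coefficients built from $h$ and the two Bernoulli parameters.

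Next I would introduce the two pieces of the symmetric difference, $A = \Gamma^\star\setminus\Gamma$ and $B = \Gamma\setminus\Gamma^\star$, and rewrite the four regions above as perturbations of $\Gamma^\star$ and $\Gamma^{\star c}$ by $A$ and $B$. Subtracting the analogous expansion for $R(\Gamma^\star)$ (where $A$ and $B$ are empty) causes all the terms that depend on $\Gamma^\star$ alone to cancel, leaving
\[
R(\Gamma) - R(\Gamma^\star) = \bigl[(1+h)\fGammastar - 1\bigr]\int_A g\,dx + \bigl[1 - (1+h)\fGammastarc\bigr]\int_B g\,dx.
\]
The two bracketed coefficients are exactly what the hypothesis \eqref{eq:h.bound} is designed to sign: $(1+h)\fGammastar - 1 > 0$ is the left inequality and $1 - (1+h)\fGammastarc > 0$ is the right inequality. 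Since $g\ge 0$, both integrals are non-negative, so $R(\Gamma) \ge R(\Gamma^\star)$.

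For uniqueness one notes that equality forces $\int_A g = \int_B g = 0$, which (assuming $g>0$ on $\Omega$, as is implicit in the setup) means $A$ and $B$ have Lebesgue measure zero and hence $\Gamma = \Gamma^\star$ up to a null set. I do not see any real obstacle here: the whole argument is essentially bookkeeping once one realises that the symmetric difference $\Gamma\sdiff\Gamma^\star$ is the natural object, and that the weight $h$ enters only through the single scalar $(1+h)$ that the condition \eqref{eq:h.bound} pins between $\fGammastarc$ and $\fGammastar$. The only mild subtlety is choosing the correct orientation so that the $A$-term and the $B$-term both receive a positive coefficient; the stochastic ordering $\fGammastar > \fGammastarc$ is precisely what makes an $h$ satisfying \eqref{eq:h.bound} exist and ensures the two coefficients have matching signs.
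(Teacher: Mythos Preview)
Your proof is correct and follows essentially the same route as the paper: you arrive at the identity $R(\Gamma)-R(\Gamma^\star)=[(1+h)\fGammastar-1]\,P_g(X\in\Gamma^\star\setminus\Gamma)+[1-(1+h)\fGammastarc]\,P_g(X\in\Gamma\setminus\Gamma^\star)$, which is exactly the paper's expression, and then invoke \eqref{eq:h.bound} to sign both coefficients. The only cosmetic difference is that the paper forms the pointwise loss difference $\ell_\Gamma-\ell_{\Gamma^\star}$ first and then takes expectation, whereas you expand each risk and subtract; the uniqueness step via $g>0$ matches the paper's appeal to Assumption~\ref{A:true}.
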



Either one---but not both---of the above inequalities can be made inclusive and the result still holds.  The condition in \eqref{eq:h.bound} deserves additional explanation.  For example, if we know $\fGammastar \geq \frac{1}{2} > \fGammastarc$, then we take $h=1$, which means that in \eqref{eq:mce_w} we penalize both intensities of $1$ outside $\Gamma$ and intensities of $-1$ inside $\Gamma$ by a loss of $1$.  If, however, we know the overall image brightness is higher so that $\fGammastar \geq \frac{4}{5} > \fGammastarc$ then we take $h = 1/4$ in \eqref{eq:mce_w} and penalize bright pixels outside $\Gamma$ by less than dull pixels inside $\Gamma$.  To see why this loss balancing is so crucial, suppose the second case above holds so that $\fGammastar=4/5$ and $\fGammastarc = 3/4$, but we take $h = 1$ anyway.  Then, in \eqref{eq:mce_w}, $1(y=+1, x \in \Gamma^c)$ is very often equal to $1$ while $1(y=-1, x \in \Gamma)$ is very often $0$.  We will likely minimize the expected loss then by incorrectly taking $\Gamma$ to be all of $\Omega$ so that the first term in the loss vanishes.  Knowing a working $h$ corresponds to having some prior information about $\fGammastar$ and $\fGammastarc$, but we can also use the data to estimate a good value of $h$ and we describe this data-driven strategy in Section~\ref{SS:scaling}.
  
Next, for the continuous case, we assume that the pixel intensity takes value in $\RR$.  The proposed strategy is to modify the misclassification error  \eqref{eq:mce_w} by working with a suitably discretized pixel intensity measurement, reminiscent of threshold modeling.  In particular, consider the following version of the misclassification error, depending on parameters $(c, k, z)$, with $c,k>0$:
\begin{equation}
\label{eq:mce_cont_w}
\ell_\Gamma(x,y) = \ell_\Gamma(x,y \mid c, k, z) = k \, 1(y > z, x \in \Gamma^c) + c \, 1(y \leq z, x \in \Gamma). 
\end{equation}
Again, we claim that, for suitable $(c, k, z)$, the risk function is minimized at $\Gamma^\star$.  Let $\FGamma$ and $\FGammac$ denote the distribution functions corresponding to the densities $\fGamma$ and $\fGammac$ in \eqref{eq:gamma}, respectively.  Recall our stochastic ordering assumption implies $\FGammastar(z) < \FGammastarc(z)$.  

\begin{proposition}
\label{prop:min.continuous}
If $(c, k, z)$ in \eqref{eq:mce_cont_w} satisfies 
\begin{equation}
\label{eq:ckz.bound.1}
\FGammastar(z) < \frac{k}{k+c} < \FGammastarc(z),
\end{equation}
then the risk function $R(\Gamma) = \PGammastar \ell_\Gamma$ is minimized at $\Gamma^\star$.
\end{proposition}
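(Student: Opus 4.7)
The plan is to compute $R(\Gamma) - R(\Gamma^\star)$ in closed form and verify that hypothesis~\eqref{eq:ckz.bound.1} forces it to be non-negative for every $\Gamma$. First I would expand $R(\Gamma) = \PGammastar \ell_\Gamma$ by conditioning on $X$: under $\PGammastar$, the conditional probability $\PP(Y > z \mid X=x)$ equals $1 - \FGammastar(z)$ when $x \in \Gamma^\star$ and $1 - \FGammastarc(z)$ when $x \in \Gamma^{\star c}$, and similarly for $\PP(Y \leq z \mid X=x)$. So the two indicator terms in~\eqref{eq:mce_cont_w} integrate into a sum over the four cells obtained by intersecting $\{\Gamma, \Gamma^c\}$ with $\{\Gamma^\star, \Gamma^{\star c}\}$, each weighted by $g$.

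Next I would refine both $R(\Gamma)$ and $R(\Gamma^\star)$ over this common partition of $\Omega$. The pieces on $\Gamma \cap \Gamma^\star$ and on $\Gamma^c \cap \Gamma^{\star c}$ cancel in the difference, leaving contributions only on the symmetric difference $\Gamma \sdiff \Gamma^\star$. After collecting terms the identity I anticipate is
\[
R(\Gamma) - R(\Gamma^\star) = \int_{\Gamma^\star \setminus \Gamma} \bigl[\, k - (k+c)\FGammastar(z)\,\bigr]\, g(x)\, dx + \int_{\Gamma \setminus \Gamma^\star} \bigl[\, (k+c)\FGammastarc(z) - k\,\bigr]\, g(x)\, dx.
\]
The hypothesis~\eqref{eq:ckz.bound.1} is exactly $(k+c)\FGammastar(z) < k < (k+c)\FGammastarc(z)$, so both bracketed coefficients are strictly positive. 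Hence the right-hand side is non-negative for every $\Gamma$, and it vanishes precisely when $\Gamma \sdiff \Gamma^\star$ is $g$-null, which is the desired minimizing property.

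The argument is essentially algebraic bookkeeping; no serious obstacle arises beyond careful tracking of the four cells and of the role of the stochastic ordering assumption $\FGammastar(z) < \FGammastarc(z)$, which makes the interval $(\FGammastar(z), \FGammastarc(z))$ nonempty and thus guarantees that weights $(c, k)$ satisfying~\eqref{eq:ckz.bound.1} exist at all. The same calculation reproduces Proposition~\ref{prop:min.binary} after setting $z$ between the two Bernoulli success probabilities and taking $(k, c) = (h, 1)$, so a single template proof in fact covers both statements.
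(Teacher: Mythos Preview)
Your proposal is correct and follows essentially the same route as the paper: both compute $R(\Gamma)-R(\Gamma^\star)$ and reduce it to two integrals over $\Gamma^\star\setminus\Gamma$ and $\Gamma\setminus\Gamma^\star$ with coefficients $k-(k+c)\FGammastar(z)$ and $(k+c)\FGammastarc(z)-k$, then invoke \eqref{eq:ckz.bound.1} to make both positive. The only cosmetic difference is that the paper first forms the pointwise loss difference $\ell_\Gamma-\ell_{\Gamma^\star}$ and then takes expectation, whereas you compute each risk separately via conditioning and then subtract; the paper also explicitly invokes the lower bound on $g$ (Assumption~\ref{A:true}) to turn ``$g$-null'' into $\Gamma=\Gamma^\star$, which you leave implicit.
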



Again, either one---but not both---of the above inequalities in \eqref{eq:ckz.bound.1} can be made inclusive and the result still holds.  The parameters $k$ and $c$ in \eqref{eq:mce_cont_w} determine the scale of the loss as mentioned in Section~\ref{S:intro}, while $z$ determines an intensity cutoff.  According to the loss in \eqref{eq:mce_cont_w}, if a given pixel is located at $x\in \Gamma^c$, and with intensity $y$ larger than cutoff $z$, it will incur a loss of $k>0$.  This implies that the true image $\Gamma^\star$ can be identified by working with a suitable version of the loss \eqref{eq:mce_cont_w}.  A similar condition to \eqref{eq:ckz.bound.1}, see Assumption~\ref{A:ckz} in Section~\ref{S:post_concentration}, says what scaling is needed in order for the Gibbs posterior to concentrate at the optimal rate.  Although the conditions on the scaling all involve the unknown distribution $\PGammastar$, a good choice of $(c, k, z)$ can be made based on the data alone, without prior information, and we discuss this strategy in Section~\ref{SS:scaling}.

\subsection{Prior specification}
\label{SS:prior}

We specify a prior distribution for the boundary of the region $\Gamma$ by first expressing the pixel locations $x$ in terms of polar coordinates $(\theta,r)$, an angle and radius, where $\theta \in [0,2\pi]$ and $r > 0$.  The specific reference point and angle in $\Omega$ used to define polar coordinates are essentially arbitrary, subject to the requirement that any point in $\Gamma^\star$ can be connected to the reference point by a line segment contained in $\Gamma^\star$.  \citet{Li.Ghosal.2015} tested the influence of the reference point in simulations and found it to have little influence on the results.  Using polar coordinates the boundary of $\Gamma$ can be determined by the parametric curve $(\theta, \gamma(\theta))$.  We proceed to model this curve $\gamma$.  

Whether one is taking a Bayes or Gibbs approach, a natural strategy to model the image boundary is to express $\gamma$ as a linear combination of suitable basis functions, i.e., $\gamma(\theta) = \hat{\gamma}_{D, \beta}(\theta) = \sum_{j = 1}^{D} \beta_j B_{j,D}(\theta)$.  \citet{Li.Ghosal.2015} use the eigenfunctions of the squared exponential periodic kernel as their basis functions.  Here we consider a model based on free knot b-splines, where the basis functions are defined recursively as 
\begin{align*}
B_{i,1}(\theta) & = 1(\theta \in [t_i, t_{i+1}]) \\
B_{i,k}(\theta) & = \frac{\theta - t_i}{t_{i+k-1} - t_i}B_{i,k-1}(\theta) + \frac{t_{i+k}-\theta}{t_{i+k} - t_{i+1}}B_{i+1, k-1}(\theta),
\end{align*}
where $\theta \in [0,2\pi]$, $t_{-2},t_{-1},t_{0}$ and $t_{D+1}, t_{D+2}, t_{D+3}$ are outer knots, $t_{1}, ..., t_{D}$ are inner knots, and $\beta = (\beta_1, ..., \beta_{D}) \in (\RR^+)^D$ is the vector of coefficients.  Note that we restrict the coefficient vector $\beta$ to be positive because the function values $\gamma(\theta)$ measure the radius of a curve from the origin.  In the simulations in Section~\ref{S:Simulations}, the coefficients $\beta_2, ..., \beta_D$ are free parameters, while $\beta_1$ is calculated deterministically to force the boundary to be closed, i.e. $\gamma(0) = \gamma(2\pi)$, and we require $t_{1} = 0$ and $t_{D} = 2\pi$; all other inner knots are free.  Our model based on the b-spline representation seem to perform as well as the eigenfunctions used in \citet{Li.Ghosal.2015} for smooth boundaries, but a bit better for boundaries with corners; see Section~\ref{S:Simulations}.  

Therefore, the boundary curve is $\gamma$ is parametrized by an integer $D$ and a $D$-vector $\beta$.  We introduce a prior $\Pi$ on $(D,\beta)$ hierarchically as follows: $D$ has a Poisson distribution with rate $\mu_D$ and, given $D$, the coordinates $\beta_1,\ldots,\beta_D$ of $\beta$ are iid exponential with rate $\mu_\beta$.  These choices satisfy the technical conditions on $\Pi$ detailed in Section~\ref{S:post_concentration}.  In our numerical experiments in Section~\ref{S:Simulations}, we take $\mu_D=12$ and $\mu_\beta = 10$.

\section{Gibbs posterior convergence}
\label{S:post_concentration}

The Gibbs model depends on two inputs, namely, the prior and the loss function.  In order to ensure that the Gibbs posterior enjoys desirable asymptotic properties, some conditions on both of these inputs are required.  The first assumption listed below concerns the loss; the second concerns the true image boundary $\gamma^\star = \partial \Gamma^\star$; and the third concerns the prior.  Here we will focus on the continuous intensity case, since the only difference between this and the binary case is that the latter provides the discretization for us.  

\begin{assumption}
\label{A:ckz}
Loss function parameters $(c, k, z)$ in \eqref{eq:mce_cont_w} satisfy 
\begin{equation}
\label{eq:ckz.bound.2}
\FGammastar(z) < \frac{e^k-1}{e^{c+k}-1} \quad \text{and} \quad \FGammastarc(z) > \frac{e^k - 1}{e^k - e^{-c}}. 
\end{equation}
\end{assumption}  

Compared to the condition \eqref{eq:ckz.bound.1} that was enough to allow the loss function to identify the true $\Gamma^\star$, condition \eqref{eq:ckz.bound.2} in Assumption~\ref{A:ckz} is only slightly stronger.  This can be seen from the following inequality:
\[ \frac{e^k - 1}{e^k - e^{-c}} > \frac{k}{k+c} > \frac{e^k-1}{e^{c+k}-1}. \]
However, if $(c,k)$ are small, then the three quantities in the above display are all approximately equal, so Assumption~\ref{A:ckz} is not much stronger than what is needed to identify $\Gamma^\star$.  Again, these conditions on $(c, k, z)$ can be understood as providing a meaningful scale to the loss function.  Intuitively, the scale of the loss between observations receiving no loss versus some loss, expressed by parameters $k$ and $c$, should be related to the level of information in the data.  When $\FGammastar(z)$ and $\FGammastarc(z)$ are far apart, the data can more easily distinguish between $\FGammastar$ and $\FGammastarc$, so we are free to assign larger losses than when $\FGammastar(z)$ and $\FGammastarc(z)$ are close and the data are relatively less informative.

The ability of a statistical method to make inference on the image boundary will depend on how smooth the true boundary is.  \citet{Li.Ghosal.2015} interpret $\gamma^\star$ as a function from the unit circle to the positive real line, and they formulate a H\"older smoothness condition for this function.  Following the prior specification described in Section~\ref{SS:prior}, we treat the boundary $\gamma^\star$ as a function from the interval $[0,2\pi]$ to the positive reals, and formulate the smoothness condition on this arguably simpler version of the function.  Since the reparametrization of the unit circle in terms of polar coordinates is smooth, it is easy to check that the H\"older smoothness condition \eqref{eq:holder} below is equivalent to that in \citet{Li.Ghosal.2015}.  
 
\begin{assumption}
\label{A:true}
The true boundary function $\gamma^\star: [0,2\pi] \to \RR^+$ is $\alpha$-H\"older smooth, i.e., there exists a constant $L=L_{\gamma^\star} > 0$ such that 
\begin{equation}
\label{eq:holder}
|(\gamma^\star)^{([\alpha])}(\theta) - (\gamma^\star)^{([\alpha])}(\theta')| \leq L |\theta - \theta'|^{\alpha - [\alpha]}, \quad \forall \; \theta,\theta' \in [0,2\pi], 
\end{equation}
where $(\gamma^\star)^{(k)}$ denotes the $k^\text{th}$ derivative of $\gamma^\star$ and $[\alpha]$ denotes the largest integer less than or equal to $\alpha$.  Following the description of $\Gamma^\star$ in the introduction, we also assume that the reference point is strictly interior to $\Gamma^\star$ meaning that it is contained in an open set itself wholly contained in $\Gamma^\star$ so that $\gamma^\star$ is uniformly bounded away from zero.  Moreover, the density $g$ for $X$, as in \eqref{eq:gamma}, is uniformly bounded above by $\overline{g} := \sup_{x \in \Omega} g(x)$ and below by $\underline{g} := \inf_{x \in \Omega} g(x) \in (0,1)$ on $\Omega$.  
\end{assumption}

General results are available on the error in approximating an $\alpha$-H\"older smooth function by b-splines of the form specified in Section~\ref{SS:prior}.  Indeed, Theorem~6.10 in \citet{Schumaker.2007} implies that if $\gamma^\star$ satisfies \eqref{eq:holder}, then 
\begin{equation} 
\label{eq:bsplineapprox}
\text{$\forall$ $d > 0$, $\exists$ $\beta_d^\star \in (\RR^+)^d$ such that $\|\gamma^\star - \hat\gamma_{d,\beta_d^\star}\|_\infty \lesssim d^{-\alpha}$.}
\end{equation}
Since $\gamma^\star(\theta)>0$, we can consider all coefficients to be positive; i.e. $\beta_d^\star \in (\RR^+)^d$ and see Lemma~1(b) in \citet{shen.ghosal.2015}.  The next assumption about the prior makes use of the approximation property in \eqref{eq:bsplineapprox}.  


\begin{assumption}
\label{A:prior}
Let $\beta_d^\star$, for $d > 0$, be as in \eqref{eq:bsplineapprox}.  Then there exists $C, m > 0$ such that the prior $\Pi$ for $(D,\beta)$ satisfies, for all $d > 1$,  
\begin{align*}
\log \Pi(D > d) & \lesssim -d \log d, \\
\log \Pi(D=d) & \gtrsim -d \log d, \\
\log \Pi(\|\beta - \beta_d^\star\|_1 \leq kd^{-\alpha} \mid D=d) & \gtrsim -d\log\{1/(kd^{-\alpha})\}, \\
\log \Pi(\beta \not\in [-m,m]^d \mid D=d) & \lesssim \log d - C m.
\end{align*}
\end{assumption}

The first two conditions in Assumption~\ref{A:prior} ensure that the prior on $D$ is sufficiently spread out while the second two conditions ensure that there is sufficient prior support near $\beta$'s that approximate $\gamma^\star$ well.  Assumption~\ref{A:prior} is also needed in \citet{Li.Ghosal.2015} for convergence of the Bayesian posterior at the optimal rate.  However, the Bayes model also requires assumptions about the priors on the nuisance parameters, e.g., Assumption~C in \citet{Li.Ghosal.2015}, which are not necessary in our approach here.  

In what follows, let $A \sdiff B$ denote the symmetric difference of sets in $\Omega$ and $\lambda(A \sdiff B)$ its Lebesgue measure. 


\begin{theorem}
\label{thm:rate}
With a slight abuse of notation, let $\Pi$ denote the prior for $\Gamma$, induced by that on $(D,\beta)$, and $\Pi_n$ the corresponding Gibbs posterior \eqref{eq:gibbs}.  Under Assumptions~\ref{A:ckz}--\ref{A:prior}, there exists a constant $M > 0$ such that 
\[ \PGammastar \Pi_n(\{\Gamma: \lambda(\Gamma^\star \sdiff \Gamma) > M\eps_n\}) \to 0 \quad \text{as $n \to \infty$}, \]
where $\eps_n = \{(\log n)/n\}^{\alpha/(\alpha+1)}$ and $\alpha$ is the smoothness coefficient in Assumption~\ref{A:true}.   
\end{theorem}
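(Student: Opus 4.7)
The plan is to adapt the standard posterior-concentration template to the Gibbs setting \eqref{eq:gibbs}, exploiting that here the excess risk is \emph{linear} in the target $\lambda$-pseudometric, so the natural scale in the Gibbs exponent is $n\eps_n$ rather than the usual $n\eps_n^2$. Under Assumption~\ref{A:ckz}, decomposing $\Omega$ into $\Gamma \cap \Gamma^\star$, $\Gamma^c \cap \Gamma^{\star c}$, $\Gamma^\star \setminus \Gamma$, and $\Gamma \setminus \Gamma^\star$ and evaluating $R(\Gamma) = \PGammastar\ell_\Gamma$ directly gives
\[
R(\Gamma) - R(\Gamma^\star) = \int_{\Gamma^\star \setminus \Gamma} g(x)\{k(1-\FGammastar(z)) - c\FGammastar(z)\}\,dx + \int_{\Gamma \setminus \Gamma^\star} g(x)\{c\FGammastarc(z) - k(1-\FGammastarc(z))\}\,dx.
\]
Since \eqref{eq:ckz.bound.2} implies both bracketed quantities are strictly positive and $\underline g > 0$, this yields the margin bound $R(\Gamma) - R(\Gamma^\star) \geq c_0\,\lambda(\Gamma\sdiff\Gamma^\star)$. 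An analogous calculation for $\rho(\Gamma) := -\log \PGammastar e^{-(\ell_\Gamma - \ell_{\Gamma^\star})}$ shows $\rho(\Gamma) \gtrsim \lambda(\Gamma\sdiff\Gamma^\star)$---this is precisely where the strengthening of \eqref{eq:ckz.bound.1} to \eqref{eq:ckz.bound.2} is required---and since $\ell_\Gamma$ is bounded, the variance of $\ell_\Gamma - \ell_{\Gamma^\star}$ under $\PGammastar$ is also $\lesssim \lambda(\Gamma\sdiff\Gamma^\star)$.

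After factoring out $e^{-nR_n(\Gamma^\star)}$ from \eqref{eq:gibbs}, I would write $\Pi_n(B) = N_n/D_n$ with $N_n = \int_B e^{-n\Delta_n(\Gamma)}\Pi(d\Gamma)$, $D_n = \int e^{-n\Delta_n(\Gamma)}\Pi(d\Gamma)$, and $\Delta_n(\Gamma) = R_n(\Gamma) - R_n(\Gamma^\star)$. For the denominator, take $d_n \asymp (n/\log n)^{1/(\alpha+1)}$ so that $d_n^{-\alpha}\asymp\eps_n$ and $d_n\log d_n \asymp n\eps_n$. Using \eqref{eq:bsplineapprox} and Lipschitz continuity of $\beta\mapsto \hat\gamma_{d,\beta}$ from $\|\cdot\|_1$ to $\|\cdot\|_\infty$, the set $A_n = \{D=d_n,\|\beta - \beta_{d_n}^\star\|_1 \leq c_1\eps_n\}$ corresponds to regions $\Gamma$ with $\lambda(\Gamma\sdiff\Gamma^\star)\lesssim\eps_n$, and Assumption~\ref{A:prior} gives $\log\Pi(A_n)\gtrsim -c_2 n\eps_n$. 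Applying Jensen's inequality to $e^{-n\Delta_n}$ integrated against $\Pi(\cdot)/\Pi(A_n)$ on $A_n$, followed by a Chebyshev bound on the localized mean (whose variance is $\lesssim \eps_n/n$ because each $\ell_\Gamma - \ell_{\Gamma^\star}$ has variance $\lesssim \eps_n$ on $A_n$), yields $D_n \geq e^{-c_3 n\eps_n}$ on a $\PGammastar$-event of probability tending to one.

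For the numerator on $B=\{\Gamma: \lambda(\Gamma\sdiff\Gamma^\star) > M\eps_n\}$, I would truncate via a sieve $\mathcal F_n = \{(D,\beta): D\leq D_n,\, \|\beta\|_\infty\leq m_n\}$ with $D_n, m_n$ chosen so that Assumption~\ref{A:prior} gives $\Pi(\mathcal F_n^c)\leq e^{-Kn\eps_n}$ for $K$ as large as desired; the contribution from $\mathcal F_n^c$ is then absorbed by the $D_n$-bound. On $\mathcal F_n$, standard $L_\infty$ covering bounds for free-knot polynomial splines give $\log N(\eps_n,\mathcal F_n,\|\cdot\|_\infty)\lesssim D_n\log(m_nD_n/\eps_n)\lesssim n\eps_n$, which transfers to the $\lambda$-pseudometric through the sub-graph construction of $\Gamma$ from $\gamma$. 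Applying Bernstein's inequality to the i.i.d.\ sum defining $\Delta_n(\Gamma)$ at each covering center, using the variance bound $\lesssim\lambda(\Gamma\sdiff\Gamma^\star)$ and a peeling over shells $\{2^j\eps_n<\lambda(\Gamma\sdiff\Gamma^\star)\leq 2^{j+1}\eps_n\}$, produces $\PGammastar\{\inf_{\Gamma\in B\cap\mathcal F_n}\Delta_n(\Gamma) \leq \tfrac12 c_0 M\eps_n\}\leq e^{-c_4 M n\eps_n}$, and hence $N_n \leq e^{-(c_4 M - c_5)n\eps_n}$ with high probability. Choosing $M$ sufficiently large then forces $N_n/D_n \to 0$ in expectation, proving the theorem.

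The main obstacle, in my view, is the peeling argument in the numerator step. Because the variance of the loss increment is only linear in $\lambda(\Gamma\sdiff\Gamma^\star)$, a single Bernstein bound applied uniformly on $B$ would introduce an extra $\log n$ factor and spoil the target rate---destroying adaptivity---so a careful shell decomposition with separate entropy bounds at each scale $2^j\eps_n$ is essential. A secondary point is verifying that the map $\beta\mapsto \Gamma_{D,\beta}$ has only polynomial Lipschitz constant in $(D,m)$ from $\|\cdot\|_1$ to $\lambda$-distance, which is routine for b-splines but must be tracked carefully so that the sieve and entropy budgets close at the same exponent as the prior-mass bound in Assumption~\ref{A:prior}.
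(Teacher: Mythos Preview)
Your outline is correct in spirit and follows the standard posterior-concentration template, but your numerator argument takes a longer route than the paper's, and the step you flag as ``the main obstacle'' does not actually arise there.

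You already derive the key inequality $\rho(\Gamma):=-\log\PGammastar e^{-(\ell_\Gamma-\ell_{\Gamma^\star})}\gtrsim\lambda(\Gamma\sdiff\Gamma^\star)$ under Assumption~\ref{A:ckz}, but then you drop it and switch to Bernstein at covering centers plus peeling over shells. The paper instead uses that moment-generating-function bound as the \emph{entire} numerator engine: by Tonelli and the i.i.d.\ structure,
\[
\PGammastar N_n(B)=\int_B\bigl\{\PGammastar e^{-(\ell_\Gamma-\ell_{\Gamma^\star})}\bigr\}^n\,\Pi(d\Gamma)\leq\int_B e^{-n\rho\,\lambda(\Gamma\sdiff\Gamma^\star)}\,\Pi(d\Gamma).
\]
One covers $A_n\cap\Sigma_n$ by $J_n\leq e^{K_1 n\eps_n}$ sup-norm balls with centers in $A_n$, and a short geometric lemma (relating $\lambda(\Gamma\sdiff\Gamma^\star)$ to $\|\gamma-\gamma^\star\|_1$ and hence to $\|\cdot\|_\infty$) shows that every $\Gamma$ in such a ball still has $\lambda(\Gamma\sdiff\Gamma^\star)\geq\eta M\eps_n$ for a fixed $\eta>0$. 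The integrand is then uniformly $\leq e^{-n\rho\eta M\eps_n}$ on each ball, so $\PGammastar N_n(A_n\cap\Sigma_n)\leq J_n\,e^{-n\rho\eta M\eps_n}$ and one is done. No Bernstein, no chaining, no peeling; the linearity of the excess risk in $\lambda$ that you correctly emphasize is precisely what makes the one-shot MGF bound sharp enough to close at scale $n\eps_n$ without a shell decomposition. Your Bernstein-plus-peeling route would also work, but note that ``Bernstein at each covering center'' does not by itself control $\inf_{\Gamma\in B\cap\mathcal F_n}\Delta_n(\Gamma)$; you still owe a uniform step over each ball, which the paper's expectation-based argument sidesteps entirely.

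Your denominator treatment (Jensen plus Chebyshev on the localized empirical risk, exploiting that the variance is $\lesssim\eps_n$) is essentially the same as the paper's, which packages the identical idea as a Shen--Wasserman-type lemma.
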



Theorem~\ref{thm:rate} says that, as the sample size increases, the Gibbs posterior places its mass on a shrinking neighborhood of the true boundary $\gamma^\star$.  The rate, given by the size of the neighborhood, is optimal according to \citet{mammen.1995}, up to a logarithmic factor, and adaptive since the prior does not depend on the unknown smoothness $\alpha$.  

\section{Computation}
\label{S:computation}

\subsection{Sampling algorithm}
\label{SS:rjmcmc}

We use reversible jump Markov chain Monte Carlo, as in \citet{green.1995}, to sample from the Gibbs posterior.  These methods have been used successfully in Bayesian free-knot spline regression problems; see, e.g., \citet{denison.1998} and \citet{dimatteo.2001}.  Although the sampling procedure is more complicated when allowing the number and locations of knots to be random versus using fixed knots, the resulting spline functions can do a better job fitting curves with low smoothness.  

To summarize the algorithm, we start with the prior distribution $\Pi$ for $(D,\beta)$ as discussed in Section~\ref{SS:prior}.  Next, we need to initialize values of $D$, the knot locations $\{t_{-2}, ..., t_{D+3}\}$, and the values of $\beta_2, ..., \beta_D$.  The value of $\beta_1$ is then calculated numerically to force closure.  We choose $D = 12$ with $t_{-2} = -2$, $t_{-1} = -1$, $t_{0} = -0.5$, $t_{13} = 2\pi+0.5$, $t_{14} = 2\pi+1$, $t_{15} = 2\pi+2$ and $t_{1}, ..., t_{12}$ evenly spaced in $[0,2\pi]$.  We set inner knots $t_{0} = 0$ and $t_{D} = 2\pi$ while the other inner knot locations remain free to change in birth, death, and relocation moves; we also set $\beta_2 = \beta_3 = ...= \beta_{12} = 0.1$.  Then the following three steps constitutes a single iteration of the reversible jump Markov chain Monte Carlo algorithm to be repeated until the desired number of samples are obtained:
\begin{enumerate}
\item Use Metropolis-within-Gibbs steps to update the elements of the $\beta$ vector, again solving for $\beta_1$ to force closure at the end.  In our examples we use normal proposals centered at the current value of the element of the $\beta$ vector, and with standard deviation $0.10$.
\item Randomly choose to attempt either a birth, death, or relocation move to add a new inner knot, delete an existing inner knot, or move an inner knot.  
\item Attempt the jump move proposed in Step~2.  The $\beta$ vector must be appropriately modified when adding or deleting a knot, and again we must solve for $\beta_1$.  Details on the calculation of acceptance probabilities for each move can be found in \citet{denison.1998} and \citet{dimatteo.2001}.  
\end{enumerate}
R code to implement this Gibbs posterior sampling scheme, along with the empirical loss scaling method described in Section~\ref{SS:scaling}, is available at \url{https://github.com/nasyring/GibbsImage}.

\subsection{Loss scaling}
\label{SS:scaling}

It is not clear how to select $(c, k, z)$ to satisfy Assumption~\ref{A:ckz} without knowledge of $\FGammastar$ and $\FGammastarc$.  However, it is fairly straightforward to select values of $(c, k, z)$ based on the data which are likely to meet the required condition.  First, we need a notion of optimal $(c, k, z)$ values.  If we knew $\FGammastar$ and $\FGammastarc$, then we would select $z$ to maximize $\FGammastarc(z) - \FGammastar(z)$ because this choice of $z$ gives us the point at which $\FGammastar$ and $\FGammastarc$ are most easily distinguished.  Then, we would choose $(c, k)$ to be the largest values such that \eqref{eq:ckz.bound.2} holds.  Intuitively, we want large values of $(c, k)$ so that the loss function in \eqref{eq:mce_cont_w} is more sensitive to departures from $\gamma^\star$.   

Since we do not know $\FGammastar$ and $\FGammastarc$, we estimate $\FGammastar(z)$ and $\FGammastarc(z)$ from the data.  In order to do this, we need a rough estimate of $\gamma^\star$ to define the regions $\Gamma^\star$ and $\Gamma^{\star c}$.  Our approach is to model $\gamma$ with a b-spline, as before, and estimate $\gamma^\star$ several times by minimizing \eqref{eq:mce_cont_w} using several different values of the loss scaling parameters $(c,k,z)$.  Specifically, set a grid of $z$ values $z_1, z_2, ...,z_g$, and for each $z_j$, find $(c, k) = (c_j, k_j)$ that satisfy 
\[ \frac{k_j}{k_j+c_j} = \frac{|\{i: y_i \leq z_j\}|}{n}. \]
Next, estimate $\gamma^\star$ by minimizing \eqref{eq:mce_cont_w} using $(c_j, k_j, z_j)$.  The estimate of $\gamma^\star$ provides estimates of the regions $\Gamma^\star$ and $\Gamma^{\star c}$ which we use to calculate the sample proportions 
\[\hat F_{\Gamma^\star} := \frac{|\{i: y_i \leq z_j, x_i \in \hat{\Gamma}^\star\}|}{|\{i:x_i \in \hat{\Gamma}^\star\}|}\, \hspace{3mm}{\rm{  and  }}\hspace{3mm}\,  \hat F_{\Gamma^{\star c}}:= \frac{|\{i: y_i \leq z_j, x_i \in \hat{\Gamma}^{\star c}\}|}{|\{i:x_i \in \hat{\Gamma}^{\star c}\}|}. \]
Then, the approximately optimal value is $z = \arg\max_{z_j} \hat F_{\Gamma^{\star c}}(z_j) - \hat F_{\Gamma^\star}(z_j)$.  Finally, choose the approximately optimal values of $(c, k)$ to satisfy \eqref{eq:ckz.bound.2} replacing $\FGammastar(z)$ and $\FGammastarc(z)$ by their estimates $\hat F_{\Gamma^\star}(z)$ and $\hat F_{\Gamma^{\star c}}(z)$.

Based on the simulations in Section \ref{S:Simulations}, this method produces values of $(c, k, z)$ very close to their optimal values.  Importantly, the estimated $(c, k)$ are more likely to be smaller than their optimal values than larger, which makes our estimates more likely to satisfy \eqref{eq:ckz.bound.2}.  This is a consequence of the stochastic ordering of $\FGammastar$ and $\FGammastarc$.  Unless the classifier we obtain by minimizing \eqref{eq:mce_cont_w} is perfectly accurate, we will tend to mix together samples from $\FGammastar$ and $\FGammastarc$ in our estimates.  If we estimate $\FGammastar(z)$ with some observations from $\FGammastar$ and some from $\FGammastarc$, we will tend to overestimate $\FGammastar(z)$, and vice versa we will tend to underestimate $\FGammastarc(z)$.  These errors will cause $(c, k)$ to be underestimated, and therefore more likely to satisfy \eqref{eq:ckz.bound.2}.

\section{Numerical examples}
\label{S:Simulations}
We tested our Gibbs model on data from both binary and continuous images following much the same setup as in \citet{Li.Ghosal.2015}.  The pixel locations in $\Omega = [-\frac12,\frac12]^2$ are sampled by starting with a fixed $m \times m$ grid in $\Omega$ and making a small random uniform perturbation at each grid point.  Several different pixel intensity distributions are considered.  We consider two types of shapes for $\Gamma^\star$: an ellipse with center $(0.1, 0.1)$, rotated at an angle of 60~degrees, with major axis length $0.35$ and minor axis length $0.25$; and a centered equilateral triangle of height 0.5.  The ellipse boundary will test the sensitivity of the model to boundaries which are off-center while the triangle tests the model's ability to identify non-smooth boundaries.  

We consider four binary and four continuous intensity images and compare with the Bayesian method of \citet{Li.Ghosal.2015} as implemented in the {\em BayesBD} package \citep{BayesBD} available on CRAN .

\begin{itemize}
\item[B1.]  Ellipse image, $m=100$, and $\FGammastar$ and $\FGammastarc$ are Bernoulli with parameters 0.5 and 0.2, respectively.
\item[B2.]  Same as B1 but with triangle image.    
\item[B3.]  Ellipse image, $m=500$, and $\FGammastar$ and $\FGammastarc$ are Bernoulli with parameters 0.25 and 0.2, respectively.  
\item[B4.]  Same as B3 but with triangle image.   

\item[C1.]  Ellipse image, $m=100$, and $\FGammastar$ and $\FGammastarc$ are $N(4, 1.5^2)$ and $N(1, 1)$, respectively.   
\item[C2.]  Same as C1 but with triangle image.     
\item[C3.]  Ellipse image, $m=100$, and $\FGammastar$ and $\FGammastarc$ are $0.2\, N(2, 10) + 0.8 \,N(0, 1)$, a normal mixture, and $N(0, 5)$, respectively. 
\item[C4.]  Ellipse image, $m=100$, and $\FGammastar$ and $\FGammastarc$ are $t$ distributions with 3 degrees of freedom and non-centrality parameters 1 and 0, respectively. 
\end{itemize}

For binary images, the likelihood must be Bernoulli, so the Bayesian model is correctly specified in cases B1--B4.  For the continuous examples in C1--C4, we assume a Gaussian likelihood for the Bayesian model.  Then, cases C1 and C2 will show whether or not the Gibbs model can compete with the Bayesian model when the model is correctly specified, while cases C3 and C4 will demonstrate the superiority of the Gibbs model over the Bayesian model under misspecification.  Again, the Gibbs model has the added advantage of not having to specify priors for or sample values of the mean and variance  associated with the normal conditional distributions.   	

We replicated each example scenario 100 times for both the Gibbs and Bayesian models, each time producing a posterior sample of size 4000 after a burn in of 1000 samples.  We recorded the errors---Lebesgue measure of the symmetric difference---for each run along with the estimated loss function parameters for the Gibbs models for continuous images.  The results are summarized in Tables~\ref{table:errors}--\ref{table:loss}.  We see that the Gibbs model is competitive with the fully Bayesian model in Examples B1--B4, C1, and C2, when the likelihood is correctly specified.  When the likelihood is misspecified, there is a chance that the Bayesian model will fail, as in Examples C3 and C4.  However, the Gibbs model does not depend on a likelihood, only the stochastic ordering of $\FGammastar$ and $\FGammastarc$, and it continues to perform well in these non-Gaussian examples.  From Table~\ref{table:loss}, we see that the empirical method described in Section~\ref{SS:scaling} is able to select parameters for the loss function in \eqref{eq:mce_cont_w} close to the optimal values and meeting Assumption~\ref{A:ckz}.

\begin{table}
\caption{Average errors (and standard deviations) for each example.}
\label{table:errors}
\centering
\begin{tabular}{lcccccccc}

																																							\hline
Model & B1                                                & B2                                                 & B3                                                 & B4                                                 & C1                                                 & C2                                                 & C3                                                 & C4                                                 \\ \hline
Bayes   & \begin{tabular}[c]{@{}c@{}}0.00\\ (0.00)\end{tabular} & \begin{tabular}[c]{@{}c@{}}0.02\\ (0.00)\end{tabular} & \begin{tabular}[c]{@{}c@{}}0.01\\ (0.00)\end{tabular} & \begin{tabular}[c]{@{}c@{}}0.02\\ (0.01)\end{tabular} & \begin{tabular}[c]{@{}c@{}}0.03\\ (0.03)\end{tabular} & \begin{tabular}[c]{@{}c@{}}0.04\\ (0.03)\end{tabular} & \begin{tabular}[c]{@{}c@{}}0.11\\ (0.06)\end{tabular} & \begin{tabular}[c]{@{}c@{}}0.10\\ (0.05)\end{tabular} \\
Gibbs   & \begin{tabular}[c]{@{}c@{}}0.01\\ (0.00)\end{tabular} & \begin{tabular}[c]{@{}c@{}}0.01\\ (0.00)\end{tabular} & \begin{tabular}[c]{@{}c@{}}0.02\\ (0.01)\end{tabular} & \begin{tabular}[c]{@{}c@{}}0.02\\ (0.01)\end{tabular} & \begin{tabular}[c]{@{}c@{}}0.01\\ (0.00)\end{tabular} & \begin{tabular}[c]{@{}c@{}}0.01\\ (0.00)\end{tabular} & \begin{tabular}[c]{@{}c@{}}0.01\\ (0.01)\end{tabular} & \begin{tabular}[c]{@{}c@{}}0.01\\ (0.01)\end{tabular} \\ \hline
\end{tabular}
\end{table}

\begin{table}
\caption{Average (and optimal) values of the parameters $(c, k, z)$ in \eqref{eq:mce_cont_w}.}
\label{table:loss}
\centering
\begin{tabular}{lcccc}
                                                                                                                                                                                          \hline
\multicolumn{1}{c}{\begin{tabular}[c]{@{}c@{}}Parameter\end{tabular}} & C1                                                 & C2                                                 & C3                                                 & C4                                                 \\ \hline

\multicolumn{1}{c}{$c$}                                                       & \begin{tabular}[c]{@{}c@{}}1.45\\ (1.86)\end{tabular} & \begin{tabular}[c]{@{}c@{}}1.47\\ (1.86)\end{tabular} & \begin{tabular}[c]{@{}c@{}}0.80\\ (1.27)\end{tabular} & \begin{tabular}[c]{@{}c@{}}0.71\\ (0.80)\end{tabular} \\
\multicolumn{1}{c}{$k$}                                                       & \begin{tabular}[c]{@{}c@{}}2.30\\ (2.36)\end{tabular} & \begin{tabular}[c]{@{}c@{}}2.29\\ (2.36)\end{tabular} & \begin{tabular}[c]{@{}c@{}}0.26\\ (0.34)\end{tabular} & \begin{tabular}[c]{@{}c@{}}0.71\\ (0.75)\end{tabular} \\
\multicolumn{1}{c}{$z$}                                                       & \begin{tabular}[c]{@{}c@{}}2.43\\ (2.40)\end{tabular} & \begin{tabular}[c]{@{}c@{}}2.39\\ (2.40)\end{tabular} & \begin{tabular}[c]{@{}c@{}}-1.83\\ (-1.76)\end{tabular} & \begin{tabular}[c]{@{}c@{}}0.46\\ (0.46)\end{tabular} \\
 \hline
\end{tabular}
\end{table}

Figures~\ref{fig:compare} and \ref{fig:compare2} show the results of the Bayesian and Gibbs models for one simulation run in each of Examples B1--B2 and C1--C4, respectively.  The $95\%$ credible regions, as in \citet{Li.Ghosal.2015}, are highlighted in gray around the posterior means.  That is, let $u_i = \sup_\theta \{|\gamma_i(\theta) - \hat{\gamma}(\theta)|/s(\theta)\}$, where $\gamma_i(\theta)$ is the $i^{\text{th}}$ posterior boundary sample, $\hat{\gamma}(\theta)$ is the pointwise posterior mean and $s(\theta)$ the pointwise standard deviation of the $\gamma(\theta)$ samples.  If $\tau$ is the $95^{\text{th}}$ percentile of the $u_i$'s, then a $95\%$ uniform credible band is given by $\hat{\gamma}(\theta) \pm \tau \, s(\theta)$.  The results of cases B2 and C2 suggest that free-knot b-splines may do a better job of approximating non-smooth boundaries than the kernel basis functions used by \citet{Li.Ghosal.2015}.  In particular, the RJ-MCMC sampling method with its relocation moves allowed knots to move towards the corners of the triangle, thereby improving estimation of the boundary over b-splines with fixed knots.

\begin{figure}[t]
	\centering
		\includegraphics[width=.8\textwidth]{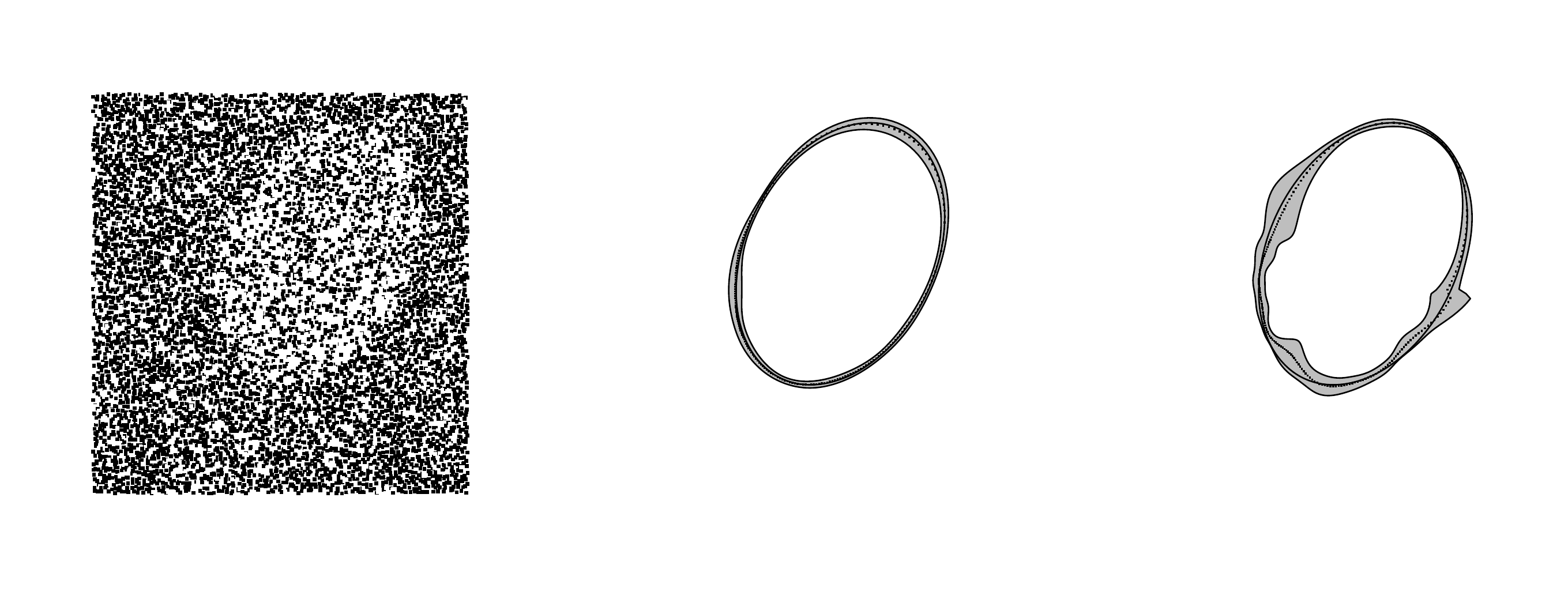}
		\includegraphics[width=.8\textwidth]{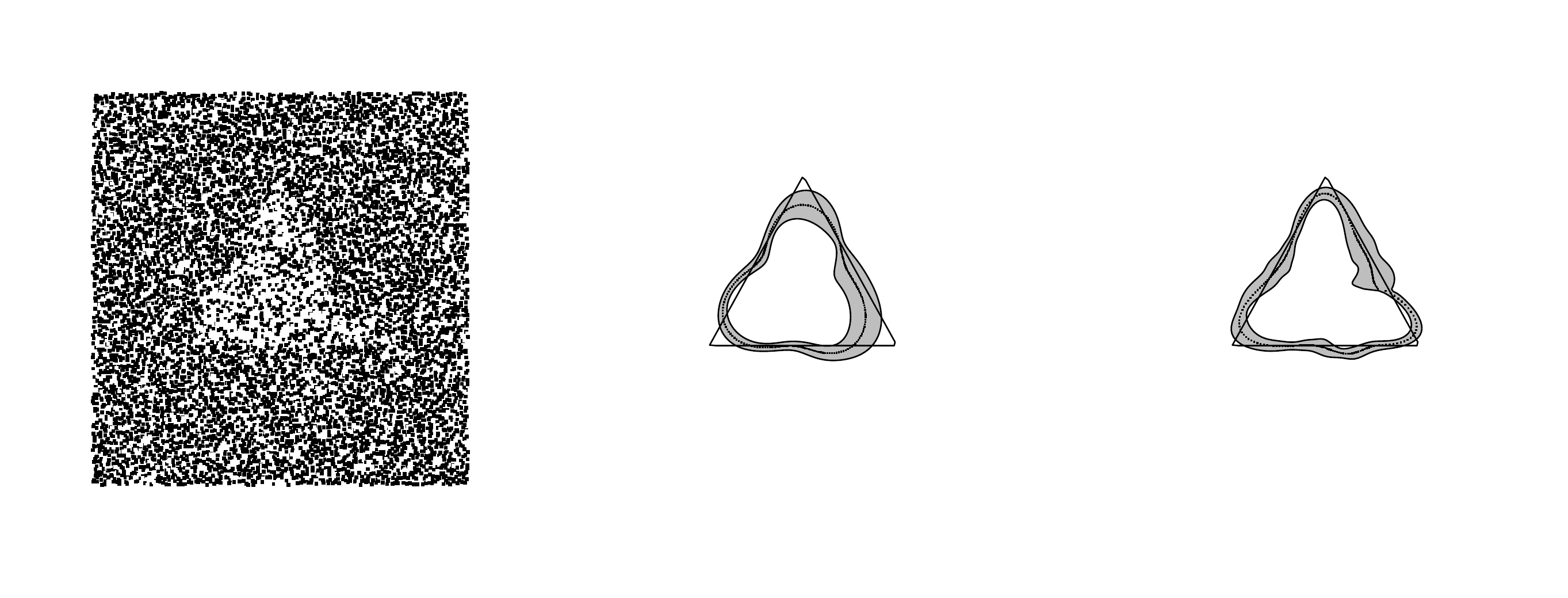}
\caption{From top, Examples B1--B2.  In each row, the observed image is on the left, the Bayesian posterior mean estimator \citep{Li.Ghosal.2015} is in the middle, and the Gibbs posterior mean estimator is on the right.  Solid lines show the true image boundary, dashed lines are the estimates, and gray regions are 95\% credible bands.}
\label{fig:compare}
\end{figure}

\begin{figure}
	\centering
		\includegraphics[width=.8\textwidth]{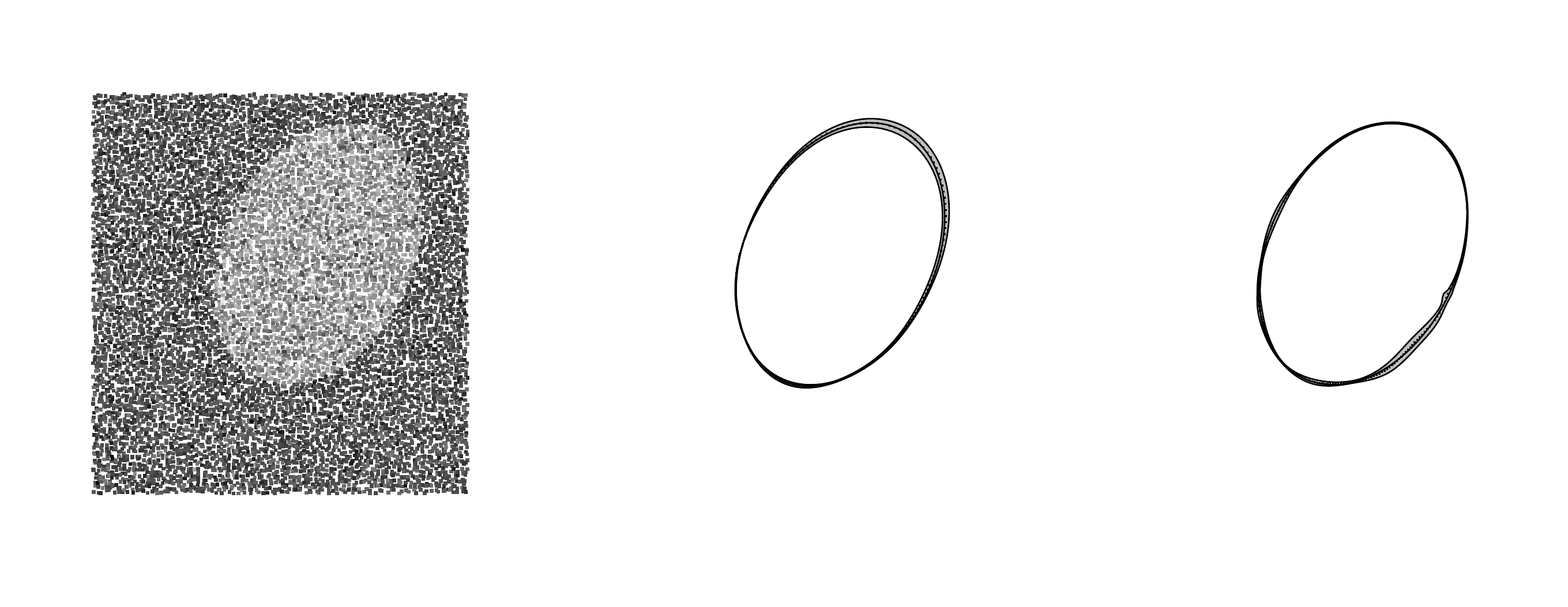}
		\includegraphics[width=.8\textwidth]{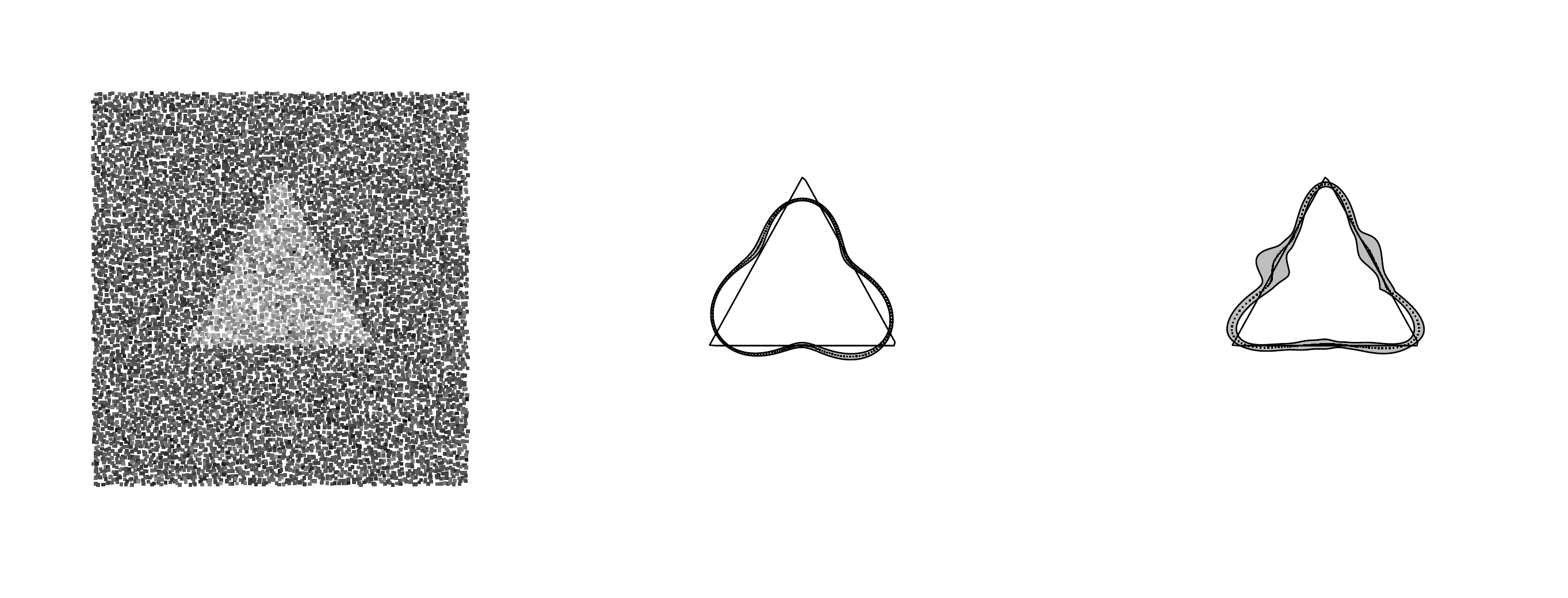}
		\includegraphics[width=.8\textwidth]{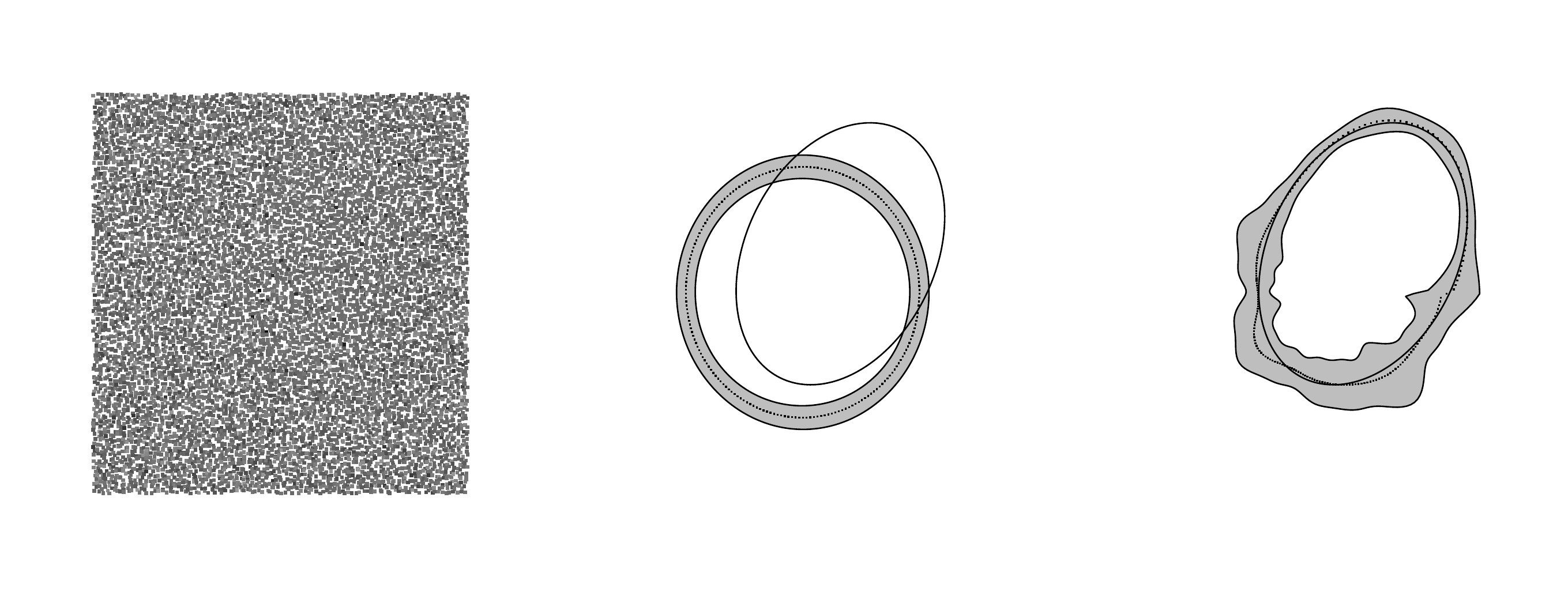}
		\includegraphics[width=.8\textwidth]{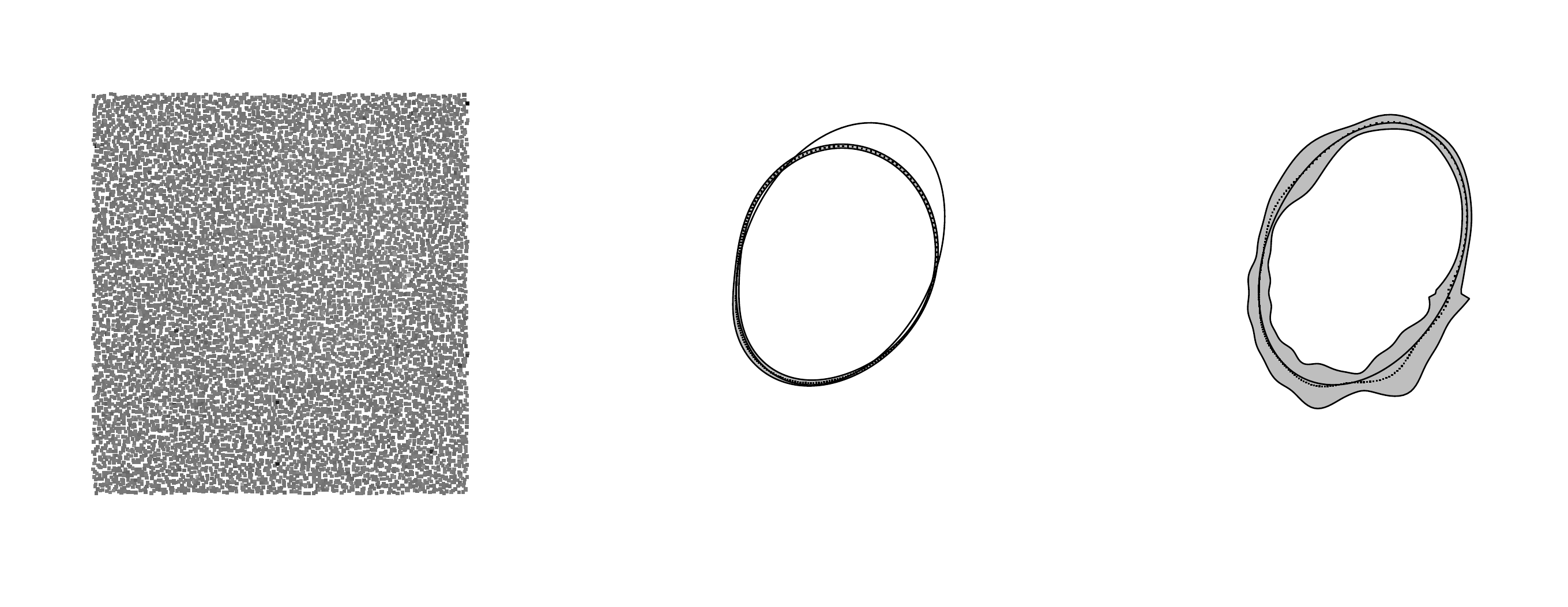}
\caption{Same as Figure~\ref{fig:compare}, but for Examples C1--C4.
}
\label{fig:compare2}
\end{figure}

\appendix

\section{Proofs}
\label{S:proofs}

\subsection{Proof of Proposition~1}
\label{p:prop1}

By the definition of the loss function in \eqref{eq:mce_w}, for a fixed $h$ and for any $\Gamma \subset \Omega$, we have 
\begin{align*}
\ell_\Gamma(X,Y) - \ell_{\Gamma^\star}(X,Y) & =h \, 1(Y=+1, X \in \Gamma^c) - h \, 1(Y=+1, X \in \Gamma^{\star c})\\
& \qquad +1(Y=-1, X \in \Gamma) - 1(Y=-1, X \in \Gamma^\star) \\
&=h \, 1(Y=+1, X \in \Gamma^\star \setminus \Gamma) - 1(Y=-1, X \in \Gamma^{\star} \setminus \Gamma)\\
& \qquad +1(Y=-1, X \in \Gamma \setminus \Gamma^\star) - h\,1(Y=+1, X \in \Gamma \setminus \Gamma^\star).
\end{align*}
Then the expectation of the loss difference above is 
\[P(X \in \Gamma^\star \setminus \Gamma) \, (h\fGammastar + \fGammastar - 1)+ P(X \in \Gamma \setminus \Gamma^\star) \, (1-\fGammastarc - h \fGammastarc), \]
where the probability statement is with respect to the density $g$ of $X$.  By Assumption~2, the density $g$ is bounded away from zero on $\Omega$, so the expectation of the loss difference is zero if and only if $\Gamma = \Gamma^\star$.  The expectation can also be lower bounded by 
\[ P(X \in \Gamma \sdiff \Gamma^\star) \, \min\{h\fGammastar + \fGammastar - 1, 1-\fGammastarc - h \fGammastarc \}. \]
Given the condition \eqref{eq:h.bound} in Proposition~1, both terms in the minimum are positive.  Therefore, $R(\Gamma) \geq R(\Gamma^\star)$ with equality if and only if $\Gamma = \Gamma^\star$, proving the claim.

\subsection{Proof of Proposition~2}
\label{p:prop2}

The proof here is very similar to that of Proposition~1.  By the definition of the loss function in \eqref{eq:mce_cont_w}, for any fixed $(c,k,z)$ and for any $\Gamma \subset \Omega$, we get 
\begin{align*}
\ell_\Gamma(X,Y) - \ell_{\Gamma^\star}(X,Y) & = k \, 1(Y\geq z, X \in \Gamma^c) - k \, 1(Y\geq z, X \in \Gamma^{\star c})\\
& \qquad +c \, 1(Y<z, X \in \Gamma) - c \, 1(Y<z, X \in \Gamma^\star)\\
&=k \, 1(Y\geq z, X \in \Gamma^\star \setminus \Gamma) - c \, 1(Y<z, X \in \Gamma^{\star} \setminus \Gamma)\\
& \qquad +c \, 1(Y<z, X \in \Gamma \setminus \Gamma^\star) - k \, 1(Y\geq z, X \in \Gamma \setminus \Gamma^\star).
\end{align*}
Then, the expectation of the loss difference above is given by
\[P(X \in \Gamma^\star \setminus \Gamma) \, \{k - k\FGammastar(z) - c\FGammastar(z)\} + P(X \in \Gamma \setminus \Gamma^\star) \, \{c\FGammastarc(z) - k + k\FGammastarc(z)\}, \]
where the probability statement is with respect to the density $g$ of $X$.  As in the proof of Proposition~1, this quantity is zero if and only if $\Gamma = \Gamma^\star$.  It can also be lower bounded by 
\[ P(X \in \Gamma \sdiff \Gamma^\star) \, \min\bigl\{k - k\FGammastar(z) - c\FGammastar(z), c\FGammastarc(z) - k + k\FGammastarc(z) \bigr\}. \]
Given the condition \eqref{eq:ckz.bound.1} in Proposition~2, both terms in the minimum are positive.  Therefore, $R(\Gamma) \geq R(\Gamma^\star)$ with equality if and only if $\Gamma = \Gamma^\star$, proving the claim.

\subsection{Preliminary results}
\label{p:thm}

Towards proving Theorem~\ref{thm:rate}, we need several lemmas.  The first draws a connection between the distance between defined by the Lebesgue measure of the symmetric difference and the sup-norm between the boundary functions.

\begin{lemma}
\label{lem:lebesgue.L1}
Suppose $\Gamma^\star$, with boundary $\gamma^\star = \partial \Gamma^\star$, satisfies Assumption~\ref{A:true}, in particular, $\underline{\gamma}^\star := \inf_{\theta \in [0,2\pi]} \gamma^\star(\theta) > 0$.  Take any $\Gamma \subset \Omega$, with $\gamma = \partial\Gamma$, such that $\lambda(\Gamma \sdiff \Gamma^\star) > \delta$ for some fixed $\delta>0$, and any $\widetilde\Gamma \subset \Omega$ such that $\tilde\gamma = \partial\widetilde\Gamma$ satisfies $\|\tilde\gamma - \gamma\|_\infty < \omega \delta$, where $\omega \in (0,1)$.  Then 
\[ \lambda(\widetilde\Gamma \sdiff \Gamma^\star) > \frac{4\delta}{\underline{\gamma}^\star} \Bigl( \frac{1}{\diam(\Omega)} - \pi \omega \Bigr), \]
where $\diam(\Omega) = \sup_{x,x' \in \Omega} \|x-x'\|$ is the diameter of $\Omega$.  So, if $\omega < \{\pi \diam(\Omega)\}^{-1}$, then the lower bound is a positive multiple of $\delta$.
\end{lemma}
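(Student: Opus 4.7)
The plan is to work in polar coordinates around the common reference point and translate back and forth between the Lebesgue measure of a symmetric difference and the $L^1$-norm of the associated boundary functions. The key geometric identity is that for two star-shaped regions $A,B$ with polar boundary functions $\gamma_A,\gamma_B$,
\[ \lambda(A \sdiff B) = \frac12 \int_0^{2\pi} |\gamma_A(\theta)^2 - \gamma_B(\theta)^2|\, d\theta = \frac12 \int_0^{2\pi} |\gamma_A - \gamma_B|\,(\gamma_A + \gamma_B)\, d\theta. \]
I would apply this identity in two different directions: an \emph{upper} bound for $\lambda(\Gamma \sdiff \Gamma^\star)$, which converts the hypothesis into an $L^1$ lower bound for $\gamma - \gamma^\star$; and a \emph{lower} bound for $\lambda(\widetilde\Gamma \sdiff \Gamma^\star)$, which converts an $L^1$ lower bound for $\tilde\gamma - \gamma^\star$ into the desired Lebesgue-measure lower bound.

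First, since every radius from an interior reference point is at most $\diam(\Omega)$, the factor $\gamma + \gamma^\star$ satisfies $\gamma + \gamma^\star \leq 2\diam(\Omega)$, and the hypothesis $\lambda(\Gamma \sdiff \Gamma^\star) > \delta$ yields $\|\gamma - \gamma^\star\|_{L^1} > \delta/\diam(\Omega)$. Second, the one-sided bound $\tilde\gamma + \gamma^\star \geq \gamma^\star \geq \underline{\gamma}^\star$ gives $\lambda(\widetilde\Gamma \sdiff \Gamma^\star) \geq \tfrac{\underline{\gamma}^\star}{2}\|\tilde\gamma - \gamma^\star\|_{L^1}$. The $L^1$ triangle inequality $\|\tilde\gamma - \gamma^\star\|_{L^1} \geq \|\gamma - \gamma^\star\|_{L^1} - \|\tilde\gamma - \gamma\|_{L^1}$, together with $\|\tilde\gamma - \gamma\|_{L^1} \leq 2\pi\,\|\tilde\gamma - \gamma\|_\infty < 2\pi\omega\delta$, then chains these two inequalities into a lower bound of the form $c\,\underline{\gamma}^\star\,\delta\,\bigl(\tfrac{1}{\diam(\Omega)} - \pi\omega\bigr)$, which is a positive multiple of $\delta$ exactly when $\omega < 1/(\pi \diam(\Omega))$, matching the stated threshold.

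The main obstacle is bookkeeping of the numerical constants so that the resulting bound matches the precise factor in the lemma; conceptually the argument is short, driven entirely by the polar-coordinate identity above and one triangle inequality. A secondary technicality is to confirm that all regions under consideration admit a single-valued polar boundary function, i.e.\ are star-shaped about the common reference point. This is built into the setup: $\Gamma^\star$ is star-shaped by Assumption~\ref{A:true}, and the prior on $\gamma$ in Section~\ref{SS:prior} parametrizes $\Gamma$ by a positive function of $\theta$, which by construction yields star-shaped regions and hence validates the polar representation used throughout.
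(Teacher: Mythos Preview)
Your proof is essentially identical to the paper's: the same polar-coordinate identity $\lambda(A\sdiff B)=\tfrac12\int_0^{2\pi}|\gamma_A^2-\gamma_B^2|\,d\theta$, the same two-sided bounds $\underline{\gamma}^\star\le\gamma+\gamma^\star\le\text{const}\cdot\diam(\Omega)$, the same triangle inequality on $\|\tilde\gamma-\gamma^\star\|_1$, and the same $L^1$-to-$L^\infty$ comparison $\|\tilde\gamma-\gamma\|_1\le 2\pi\|\tilde\gamma-\gamma\|_\infty$. Your worry about matching the exact constant is well founded---the paper's own calculation in fact produces a bound of the form $\underline{\gamma}^\star\,\delta\bigl(\tfrac{1}{\diam(\Omega)}-\pi\omega\bigr)$ with $\underline{\gamma}^\star$ in the numerator, so the displayed factor $4/\underline{\gamma}^\star$ appears to be a slip, but this does not affect the qualitative conclusion you correctly identify.
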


\begin{proof}
The first step is to connect the symmetric difference-based distance to the $L_1$ distance between boundary functions.  A simple conversion to polar coordinates gives 
\begin{align*}
\lambda(\Gamma \sdiff \Gamma^\star) & = \int_{\Gamma \sdiff \Gamma^\star} \, d\lambda \\
& = \int_0^{2\pi} \int_{\gamma(\theta) \wedge \gamma^\star(\theta)}^{\gamma(\theta) \vee \gamma^\star(\theta)} r \,dr \, d\theta \\
& = \frac12 \int_0^{2\pi} \{\gamma(\theta) \wedge \gamma(\theta^\star)\}^2 - \{\gamma(\theta) \vee \gamma(\theta^\star)\}^2 \, d\theta \\ 
& = \frac12 \int_0^{2\pi} |\gamma(\theta) - \gamma^\star(\theta)|\,|\gamma(\theta) + \gamma^\star(\theta)| \,d\theta.
\end{align*}
If we let $\underline{\gamma}^\star = \inf_\theta \gamma^\star(\theta)$, then it is easy to verify that 
\[ \underline{\gamma}^\star \leq |\gamma(\theta) + \gamma^\star(\theta)| \leq \diam(\Omega), \quad \forall \; \theta \in [0,2\pi]. \]
Therefore, 
\begin{equation}
\label{eq:lebesgue.L1}
\tfrac12 \underline{\gamma}^\star \|\gamma - \gamma^\star\|_1 \leq \lambda(\Gamma \sdiff \Gamma^\star) \leq \tfrac12 \diam(\Omega) \|\gamma - \gamma^\star\|_1. 
\end{equation}
Next, if $\lambda(\Gamma \sdiff \Gamma^\star) > \delta$, which is positive by Assumption~\ref{A:true}, then it follows from the right-most inequality in \eqref{eq:lebesgue.L1} that $\diam(\Omega) \|\gamma-\gamma^\star\|_1 > 2\delta$ and, by the triangle inequality, 
\[ \diam(\Omega) \{\|\gamma - \tilde\gamma\|_1 + \|\tilde\gamma - \gamma^\star\|_1\} > 2\delta. \]
We have $\|\gamma-\tilde\gamma\|_1 \leq 2\pi\|\gamma - \tilde\gamma\|_\infty$ which, by assumption, is less than $2\pi \omega \delta$.  Consequently, 
\[ \diam(\Omega) \{2\pi\omega \delta + \|\tilde\gamma - \gamma^\star\|_1\} > 2\delta \]
and, hence, 
\[ \|\tilde\gamma - \gamma^\star\|_1 > \frac{2\delta}{\diam(\Omega)} - 2\pi\omega\delta. \]
By the left-most inequality in \eqref{eq:lebesgue.L1}, we get 
\[ \lambda(\widetilde\Gamma \sdiff \Gamma^\star) > \frac{4\delta}{\underline{\gamma}^\star \diam(\Omega)} - \frac{4\pi\omega \delta}{\underline{\gamma}^\star} = \frac{4\delta}{\underline{\gamma}^\star} \Bigl( \frac{1}{\diam(\Omega)} - \pi \omega \Bigr), \]
which is the desired bound.  It follows immediately that the lower bound is a positive multiple of $\delta$ if $\omega < \{\pi \diam(\Omega)\}^{-1}$.  
\end{proof}

The next lemma shows that we can control the expectation of the integrand in the Gibbs posterior under the condition \eqref{eq:ckz.bound.2} on the tuning parameters $(c,k,z)$ in the loss function definition.  

\begin{lemma}
\label{lem:num}
If \eqref{eq:ckz.bound.2} holds, then $\PGammastar e^{-(\ell_\Gamma -\ell_{\Gamma^\star})} < 1-\rho\lambda(\Gamma^\star \sdiff \Gamma)$ for a constant $\rho \in (0,1)$.
\end{lemma}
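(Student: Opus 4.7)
The plan is to compute $\PGammastar e^{-(\ell_\Gamma - \ell_{\Gamma^\star})}$ directly by partitioning $\Omega$ according to how $\Gamma$ and $\Gamma^\star$ disagree, then conditioning on $X$ to reduce the inner expectation to expressions in $\FGammastar(z)$ and $\FGammastarc(z)$. First I would split $\Omega$ into the agreement region $(\Gamma \cap \Gamma^\star) \cup (\Gamma^c \cap \Gamma^{\star c})$, on which $\ell_\Gamma - \ell_{\Gamma^\star} \equiv 0$ (so the integrand equals $1$), together with the two disjoint pieces of the symmetric difference, $\Gamma^\star \setminus \Gamma$ and $\Gamma \setminus \Gamma^\star$. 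Using the case-by-case decomposition already worked out in the proof of Proposition~2, on $\Gamma^\star \setminus \Gamma$ the difference $\ell_\Gamma - \ell_{\Gamma^\star}$ equals $k$ when $Y > z$ and $-c$ when $Y \leq z$, while on $\Gamma \setminus \Gamma^\star$ it equals $-k$ when $Y > z$ and $c$ when $Y \leq z$.

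Taking the conditional expectation of $e^{-(\ell_\Gamma - \ell_{\Gamma^\star})}$ given $X=x$ then yields, on $\Gamma^\star \setminus \Gamma$, the value $e^{-k}\{1 - \FGammastar(z)\} + e^c \FGammastar(z)$, and on $\Gamma \setminus \Gamma^\star$, the value $e^k\{1 - \FGammastarc(z)\} + e^{-c} \FGammastarc(z)$. Writing $P(\cdot)$ for probability under the marginal density $g$, collecting terms, and using that the three probabilities sum to~$1$, I obtain the clean identity
\begin{equation*}
\PGammastar e^{-(\ell_\Gamma - \ell_{\Gamma^\star})} = 1 - \rho_1 \, P(X \in \Gamma^\star \setminus \Gamma) - \rho_2 \, P(X \in \Gamma \setminus \Gamma^\star),
\end{equation*}
where $\rho_1 = (1 - e^{-k}) - \FGammastar(z)(e^c - e^{-k})$ and $\rho_2 = \FGammastarc(z)(e^k - e^{-c}) - (e^k - 1)$.

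The algebraic heart of the argument is to show $\rho_1, \rho_2 > 0$; this is exactly what Assumption~\ref{A:ckz} delivers. Rearranging $\rho_1 > 0$ gives $\FGammastar(z) < (1 - e^{-k})/(e^c - e^{-k})$, and multiplying numerator and denominator by $e^k$ yields $(e^k - 1)/(e^{c+k} - 1)$, which is the first inequality in \eqref{eq:ckz.bound.2}. Rearranging $\rho_2 > 0$ gives $\FGammastarc(z) > (e^k - 1)/(e^k - e^{-c})$, which is the second inequality. Hence both coefficients are strictly positive, and setting $\rho_0 = \min(\rho_1, \rho_2) \in (0,1)$ yields $\PGammastar e^{-(\ell_\Gamma - \ell_{\Gamma^\star})} \leq 1 - \rho_0 \, P(X \in \Gamma^\star \sdiff \Gamma)$.

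Finally I would convert the probability on $\Gamma^\star \sdiff \Gamma$ to Lebesgue measure using the bound $g \geq \underline{g} > 0$ from Assumption~\ref{A:true}, giving $P(X \in \Gamma^\star \sdiff \Gamma) \geq \underline{g} \, \lambda(\Gamma^\star \sdiff \Gamma)$, and take $\rho = \rho_0 \, \underline{g}$, which lies in $(0,1)$ since both factors do. There is no substantive obstacle here; the only thing requiring care is the algebraic passage from \eqref{eq:ckz.bound.2} to the positivity of $\rho_1$ and $\rho_2$, and making sure the inequality in $y$ in the loss (strict on one side, weak on the other) is handled consistently so that the agreement region indeed contributes~$1$ to the integrand without double counting.
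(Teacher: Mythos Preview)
Your proposal is correct and follows essentially the same route as the paper's proof: partition $\Omega$ into the agreement region and the two pieces of $\Gamma^\star \sdiff \Gamma$, compute the conditional expectation of $e^{-(\ell_\Gamma-\ell_{\Gamma^\star})}$ on each, and use Assumption~\ref{A:ckz} to bound the resulting coefficients. The paper packages the same calculation slightly differently, writing the two conditional expectations directly and defining $\kappa$ as their maximum (so your $\rho_0 = 1-\kappa$), then taking $\rho=(1-\kappa)\underline{g}$; your $\rho_1,\rho_2$ are exactly $1$ minus those conditional expectations, and the algebraic check that \eqref{eq:ckz.bound.2} forces $\rho_1,\rho_2>0$ is identical in content to the paper's claim that $\kappa<1$.
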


\begin{proof}
From the proof of Proposition~2, we have
\begin{align*}
\ell_\Gamma(x,y) - \ell_{\Gamma^\star}(x,y) & = k \, 1(y \geq z, x \in \Gamma^\star \setminus \Gamma) - c \, 1(y<z, x \in \Gamma^{\star} \setminus \Gamma) \\
& \qquad +c \, 1(y<z, x \in \Gamma \setminus \Gamma^\star) - k \, (y\geq z, x \in \Gamma \setminus \Gamma^\star).
\end{align*}
The key observation is that, if $x \not\in \Gamma \sdiff \Gamma^\star$, then the loss difference is 0 and, therefore, the exponential of the loss difference is 1.  Taking expectation with respect $\PGammastar$, we get 
\begin{align*}
\PGammastar e^{-(\ell_\Gamma -\ell_{\Gamma^\star})} & = P_g(X \not\in \Gamma^\star \sdiff \Gamma) \\
& \qquad + \{e^{-k}(1-\FGammastar(z))+e^c\FGammastar(z)\} P_g(X \in \Gamma^\star \setminus \Gamma) \\
& \qquad + \{e^{-c}\FGammastarc(z)+e^k - e^k\FGammastarc(z)\} P_g(X \in \Gamma \setminus \Gamma^\star).
\end{align*}
From \eqref{eq:ckz.bound.2}, we have 
\[ \kappa := \max\{ e^{-k}(1-\FGammastar(z))+e^c\FGammastar(z), e^{-c}\FGammastarc(z)+e^k - e^k\FGammastarc(z) \} < 1, \]
so that
\begin{align*}
\PGammastar e^{-(\ell_\Gamma - \ell_{\Gamma^\star})} & \leq 1 - P_g(X \in \Gamma \sdiff \Gamma^\star) + \kappa P_g(X \in \Gamma \sdiff \Gamma^\star) \\
& = 1 - (1-\kappa) P_g(X \in \Gamma \sdiff \Gamma^\star). 
\end{align*}
Then the claim follows, with $\rho = (1-\kappa) \underline{g} < 1$, since $P_g(X \in \Gamma \sdiff \Gamma^\star) \geq \underline{g} \lambda(\Gamma \sdiff \Gamma^\star)$.    
\end{proof}

The next lemma yields a necessary lower bound on the denominator of the Gibbs posterior distribution.  The neighborhoods $G_n$ are simpler than those in Lemma~1 of \citet{shen.wasserman.2001}, because the variance of our loss difference is automatically of the same order as its expectation, but the proof is otherwise very similar to theirs, so we omit the details.     

\begin{lemma}
\label{lem:den}
Let $t_n$ be a sequence of positive numbers such that $nt_n \rightarrow 0$ and set $G_n = \{\Gamma: R(\Gamma)-R(\Gamma^\star)\leq Ct_n\}$ for some $C>0$.  Then $\int e^{-[R_n(\Gamma)-R_n(\Gamma^\star)]}\,d\Pi \gtrsim \Pi(G_n)\exp(-2nt_n)$ with $P_{\Gamma^\star}$-probability converging to $1$ as $n\rightarrow \infty$.
\end{lemma}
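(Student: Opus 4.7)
The plan is to follow the Shen--Wasserman (2001) argument the excerpt itself invokes, adapting it to this Gibbs setting where the integrand should be read as $e^{-n[R_n(\Gamma)-R_n(\Gamma^\star)]}$ (the displayed $\exp(-2nt_n)$ bound and the growth condition both point to $n$ belonging in the exponent, in the standard Shen--Wasserman sense $nt_n\to\infty$). First I would restrict the integral over $\Gamma$ to $G_n$ and then apply Jensen's inequality to the convex map $x\mapsto e^{-nx}$, pulling the exponential outside the $\Pi$-integral normalized to $G_n$. This yields
\[ \int e^{-n[R_n(\Gamma)-R_n(\Gamma^\star)]}\,d\Pi \geq \Pi(G_n)\,\exp\!\Bigl\{-\tfrac{n}{\Pi(G_n)}\int_{G_n}[R_n(\Gamma)-R_n(\Gamma^\star)]\,d\Pi(\Gamma)\Bigr\}. \]

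The remaining task is therefore to show that the averaged exponent is at most $2t_n$ with $\PGammastar$-probability tending to one. I would decompose
\[ R_n(\Gamma)-R_n(\Gamma^\star)=\bigl[R(\Gamma)-R(\Gamma^\star)\bigr]+\Delta_n(\Gamma), \]
where $\Delta_n(\Gamma)$ is the centered empirical process. The deterministic first piece is bounded by $Ct_n$ for every $\Gamma\in G_n$ by construction of $G_n$, contributing $Ct_n$ after averaging. For the stochastic piece $\Psi_n:=\Pi(G_n)^{-1}\int_{G_n}\Delta_n(\Gamma)\,d\Pi(\Gamma)$, Fubini immediately gives $\PGammastar\Psi_n=0$, and
\[ \PGammastar\Psi_n^2 = \frac{1}{n\Pi(G_n)^2}\!\!\int\!\!\int_{G_n\times G_n}\!\!\mathrm{Cov}\bigl(\ell_\Gamma-\ell_{\Gamma^\star},\,\ell_{\Gamma'}-\ell_{\Gamma^\star}\bigr)\,d\Pi(\Gamma)\,d\Pi(\Gamma'). \]

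The key input here, already flagged by the authors, is that each loss difference is uniformly bounded and supported on $\{X\in\Gamma\sdiff\Gamma^\star\}$, so $\mathrm{Var}(\ell_\Gamma-\ell_{\Gamma^\star})\lesssim\lambda(\Gamma\sdiff\Gamma^\star)$. Combining the computation in the proof of Lemma~\ref{lem:num} with the elementary bound $1-e^{-u}\leq u$ yields $R(\Gamma)-R(\Gamma^\star)\gtrsim\lambda(\Gamma\sdiff\Gamma^\star)$, so the variance is of the same order as the mean and thus at most a constant multiple of $t_n$ on $G_n$. A Cauchy--Schwarz bound on the covariance, combined with the integrability against $\Pi$ restricted to $G_n$, then gives $\PGammastar\Psi_n^2\lesssim t_n/n$, and Chebyshev's inequality delivers $|\Psi_n|\leq t_n$ with probability at least $1-O((nt_n)^{-1})\to 1$. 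Plugging this into the Jensen bound produces the claimed $\Pi(G_n)\,e^{-2nt_n}$.

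The main obstacle, as I see it, is not any single calculation but keeping the constants uniform over $G_n$: the double integral divided by $\Pi(G_n)^2$ must still evaluate to $O(t_n/n)$ after Cauchy--Schwarz, so the variance bound $\mathrm{Var}(\ell_\Gamma-\ell_{\Gamma^\star})\lesssim t_n$ must hold with a constant independent of $\Gamma\in G_n$ and of $n$. This uniformity follows because the constants $(c,k,z)$ in \eqref{eq:mce_cont_w} are fixed, the loss is bounded by $\max\{c,k\}\,\mathbf{1}(X\in\Gamma\sdiff\Gamma^\star)$, and the lower bound $R(\Gamma)-R(\Gamma^\star)\gtrsim\lambda(\Gamma\sdiff\Gamma^\star)$ coming from Lemma~\ref{lem:num} has an absolute constant $\rho$. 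Once this is verified, the proof is complete.
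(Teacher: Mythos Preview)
Your proposal is correct and follows exactly the Shen--Wasserman route the paper cites in lieu of a proof, including the key simplification the authors highlight (that $\mathrm{Var}(\ell_\Gamma-\ell_{\Gamma^\star})$ is of the same order as its mean, hence $\lesssim t_n$ on $G_n$). Your reading of the two typos in the statement---the missing factor $n$ in the exponent and $nt_n\to\infty$ rather than $nt_n\to 0$---is also correct and consistent with how the lemma is actually invoked in the proof of Theorem~\ref{thm:rate}.
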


Our last lemma is a summary of various results derived in \citet{Li.Ghosal.2015} towards proving their Theorem~3.3.  This helps us to fill in the details for the lower bound in Lemma~\ref{lem:den} and to identify a sieve---a sequence of subsets of the parameter space---with high prior probability but relatively low complexity.  

\begin{lemma}
\label{lem:li.ghosal}
Let $\eps_n$ be as in Theorem~\ref{thm:rate} and let $D_n = (\frac{n}{\log n})^{\frac{1}{\alpha+1}}$.  Then, $\|\gamma^\star - \hat\gamma_{D_n, \beta^\star}\|_\infty\leq C\eps_n$ for some $C>0$, $\beta^\star = \beta_{D_n}^\star \in  (\RR^+)^{D_n}$ from \eqref{eq:bsplineapprox}.  
\label{lem:prior}
\begin{enumerate}
\item Define the neighborhood $B_n^\star = \{(\beta,d): \beta \in \mathbb{R}^d, d=D_n, \|\gamma^\star - \hat\gamma_{d, \beta}\|_\infty\leq C\eps_n\}$.  Then $\Pi(B_n^\star) \gtrsim  \exp(-an\eps_n)$ for some $a > 0$ depending on $C$.
\item Define the sieve $\Sigma_n = \{\gamma: \gamma = \hat\gamma_{d,\beta}, \beta \in \RR^d, d\leq D_n, \|\beta\|_\infty\leq \sqrt{n/K_0}\}$.  Then $\Pi(\Sigma_n^c) \lesssim \exp(-Kn\eps_n)$ for some $K,\, K_0>0$.  
\item The bracketing number of $\Sigma_n$ satisfies $\log N(\eps_n, \Sigma_n,\|\cdot\|_\infty)\lesssim n\eps_n$.
\end{enumerate}
\end{lemma}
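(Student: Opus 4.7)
The three statements hinge on two standard B-spline facts—partition of unity yields the Lipschitz bound $\|\hat\gamma_{d,\beta} - \hat\gamma_{d,\beta'}\|_\infty \leq \|\beta - \beta'\|_\infty$, and \eqref{eq:bsplineapprox} supplies a vector $\beta^\star = \beta_{D_n}^\star \in (\RR^+)^{D_n}$ with $\|\gamma^\star - \hat\gamma_{D_n, \beta^\star}\|_\infty \lesssim D_n^{-\alpha}$—together with the identity $D_n \log D_n \asymp n\eps_n$, which follows immediately from $D_n = (n/\log n)^{1/(\alpha+1)}$ and $\eps_n = D_n^{-\alpha}$. With these in hand, the preliminary assertion $\|\gamma^\star - \hat\gamma_{D_n, \beta^\star}\|_\infty \leq C\eps_n$ is immediate.

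For part~(1), I would pick $C$ slightly larger than the approximation constant in \eqref{eq:bsplineapprox} and observe that any $(D_n, \beta)$ with $\|\beta - \beta^\star\|_1 \leq c\eps_n$ (small $c$) lies in $B_n^\star$, since the Lipschitz bound gives $\|\hat\gamma_{D_n,\beta} - \hat\gamma_{D_n,\beta^\star}\|_\infty \leq \|\beta - \beta^\star\|_1$. Hence
\[ \Pi(B_n^\star) \;\geq\; \Pi(D=D_n) \, \Pi\bigl(\|\beta - \beta^\star\|_1 \leq c\eps_n \bigm| D=D_n\bigr), \]
and two of the four prior conditions in Assumption~\ref{A:prior} supply log-lower-bounds $-D_n\log D_n$ and $-D_n \log\{1/(c\eps_n)\}$, respectively. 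Because $\log(1/\eps_n) \asymp \log n \asymp \log D_n$, both are $\gtrsim -D_n \log D_n \asymp -n\eps_n$, giving the claimed prior mass.

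For part~(2), I split $\Pi(\Sigma_n^c) \leq \Pi(D > D_n) + \sum_{d \leq D_n} \Pi(D=d)\,\Pi(\|\beta\|_\infty > \sqrt{n/K_0} \mid D=d)$. The first tail is $\lesssim \exp(-D_n \log D_n) = \exp(-c\, n\eps_n)$ by the tail condition on $D$. For each $d \leq D_n$, the fourth prior condition with $m = \sqrt{n/K_0}$ gives $\log \Pi(\|\beta\|_\infty > \sqrt{n/K_0} \mid D=d) \lesssim \log d - C\sqrt{n/K_0}$; summing over $d \leq D_n$ inflates the count by at most $D_n^2$. Choosing $K_0$ small enough that $C\sqrt{n/K_0}$ dominates the target exponent $n\eps_n$ (automatic when $\alpha \geq 1$; for $\alpha<1$ the envelope $\sqrt{n/K_0}$ must be replaced by a slightly larger polynomial in $n$ whose logarithm is still $\asymp \log n$) yields the stated bound.

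For part~(3), I use the Lipschitz inequality once more: for each fixed $d \leq D_n$, the box $\{\|\beta\|_\infty \leq \sqrt{n/K_0}\}$ admits an $\eps_n$-cube cover of cardinality $(2\sqrt{n/K_0}/\eps_n)^d \leq \exp(c\,d\log n)$, and enlarging each center $\hat\gamma_{d,\beta_0}$ to the bracket $[\hat\gamma_{d,\beta_0} - \eps_n, \hat\gamma_{d,\beta_0} + \eps_n]$ gives brackets of $\|\cdot\|_\infty$-width $2\eps_n$. Summing over $d \leq D_n$ contributes at most an extra factor $D_n$, so $\log N_{[\,]}(\eps_n, \Sigma_n, \|\cdot\|_\infty) \lesssim D_n \log n \asymp n\eps_n$. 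The only genuine obstacle I anticipate is the calibration in part~(2) when $\alpha<1$: one must enlarge the sieve envelope enough to keep the exponential tail of the $\beta_j$'s under the $-n\eps_n$ threshold, yet keep its logarithm on the scale $\log n$ so that it does not spoil the entropy budget in part~(3). Both constraints are simultaneously achievable because the envelope enters part~(2) through the exponential tail but part~(3) only through its logarithm; once this compatibility is secured, the remainder is straightforward bookkeeping with Assumption~\ref{A:prior}.
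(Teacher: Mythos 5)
Your reconstruction is sound, but note that the paper itself offers no proof of this lemma: it is presented as a summary of results established in \citet{Li.Ghosal.2015} (and, for the approximation step, \citet{shen.ghosal.2015}), so the only available comparison is with the standard random-series-prior argument those papers use --- which is exactly what you have reproduced. Your preliminary identity $D_n^{-\alpha} = \eps_n$ and $D_n \log D_n \asymp n\eps_n$ is correct, the partition-of-unity Lipschitz bound $\|\hat\gamma_{d,\beta}-\hat\gamma_{d,\beta'}\|_\infty \le \|\beta-\beta'\|_\infty \le \|\beta - \beta'\|_1$ is the right vehicle for transferring prior mass and covering numbers from coefficient space to function space, and parts (1) and (3) then follow from Assumption~\ref{A:prior} exactly as you say (modulo taking $D_n$ to be an integer). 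Your self-contained derivation is arguably more useful than the paper's citation, since the paper's loss-based setting is not literally that of \citet{Li.Ghosal.2015} and the reader must otherwise check transferability on their own.

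Your flag on part (2) is a genuine catch, not a defect of your proof. With the exponential coefficient prior of Section~\ref{SS:prior}, the fourth condition of Assumption~\ref{A:prior} only yields $\log \Pi(\|\beta\|_\infty > \sqrt{n/K_0} \mid D=d) \lesssim \log d - C\sqrt{n/K_0}$, and since $n\eps_n = n^{1/(\alpha+1)}(\log n)^{\alpha/(\alpha+1)}$ dominates $\sqrt{n}$ whenever $\alpha \le 1$, the stated sieve envelope $\sqrt{n/K_0}$ cannot deliver $\Pi(\Sigma_n^c) \lesssim e^{-Kn\eps_n}$ in that regime. The $\sqrt{n/K_0}$ envelope is calibrated to Gaussian coefficient tails (for which the exceedance probability is $e^{-cm^2}$, hence $e^{-cn/K_0} \ll e^{-Kn\eps_n}$ for every $\alpha$), which is presumably how it was inherited from \citet{Li.Ghosal.2015}. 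Your remedy --- enlarge the envelope to a polynomial $M_n$ with $M_n \gtrsim n\eps_n$ and $\log M_n \asymp \log n$, so that part (2) regains the exponent $-CM_n \lesssim -n\eps_n$ while part (3) still pays only $D_n \log(M_n/\eps_n) \asymp D_n\log n \asymp n\eps_n$ in entropy --- is exactly the right fix and leaves the rest of the proof of Theorem~\ref{thm:rate} untouched.
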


\subsection{Proof of Theorem~\ref{thm:rate}}

Define the set
\begin{equation}
\label{eq:proof1}
A_n = \{\Gamma: \lambda(\Gamma^\star \sdiff \Gamma) > M\eps_n\}.
\end{equation}
For the sieve $\Sigma_n$ in Lemma~\ref{lem:li.ghosal}, we have $\Pi_n(A_n)\leq\Pi_n(\Sigma_n^c) + \Pi_n(A_n \cap \Sigma_n)$.  We want to show that both terms in the upper bound vanish, in $L_1(\PGammastar)$. 

It helps to start with a lower bound on $I_n = \int e^{-n\{R_n(\Gamma) - R_n(\Gamma^\star)\}} \, \Pi(d\Gamma)$, the denominator in both of the terms discussed above.  First, write 
\[I_n \geq \int_{G_n}e^{-n\{R_n(\Gamma) - R_n(\Gamma^\star)\}} \, \Pi(d\Gamma)\] where $G_n$ is defined in Lemma~\ref{lem:den} with $t_n = \eps_n$ and $C>0$ to be determined.  From Proposition~2 and Lemma~\ref{lem:lebesgue.L1}
\begin{align*}
R(\Gamma) - R(\Gamma^\star)&\leq P_g(X \in \Gamma \sdiff \Gamma^\star) \min\{ k - k\FGammastar(z) - c\FGammastarc(z), c\FGammastarc(z) + k\FGammastarc(z) \} \\
& \leq \tfrac12 \, V \, \overline{g} \, \diam(\Omega) \, \|\gamma - \gamma^\star\|_1 
\end{align*}
where $V=V_{c,k,z}$ is the $\min\{\cdots\}$ term in the above display.  Let $B_\infty(\gamma^\star; r)$ denote the set of regions $\Gamma$ with boundary functions $\gamma = \partial \Gamma$ that satisfy $\|\gamma - \gamma^\star\|_\infty \leq r$.  If $\Gamma \in B_\infty(\gamma^\star; C_0 \eps_n)$, then we have 
\[ \|\gamma - \gamma^\star\|_1 \leq 2\pi \overline{g} C_0 \eps_n \]
and, therefore, $R(\Gamma) - R(\Gamma^\star) \leq C \eps_n$, where 
$C = C_0\pi V\overline{g}^2 \diam(\Omega)$.  From Lemma~\ref{lem:den}, we have $I_n \gtrsim \Pi(G_n) e^{-2C\eps_n}$, with $\PGammastar$-probability converging to $1$.  Since $G_n \supseteq B_\infty(\gamma^\star; C_0 \eps_n)$, it follows from Lemma~\ref{lem:li.ghosal}, part~1, 
\[ I_n \gtrsim \Pi\{B_\infty(\gamma^\star; C_0 \eps_n)\} e^{-2C \eps_n} \gtrsim e^{-C_1 n \eps_n}, \]
with $\PGammastar$ probability converging to one, and where $C_1 > 0$ is a constant depending on $C_0$ and $C$.   

Now we are ready to bound $\Pi_n(\Sigma_n^c)$.  Write this quantity as 
\[\Pi_n(\Sigma_n^c) =\frac{N_n(\Sigma_n^c)}{I_n} = \frac{1}{I_n} \int_{\Sigma_n^c} e^{-n\{R_n(\Gamma) - R_n(\Gamma^\star\}} \, \Pi(d\Gamma).\]
It will suffice to bound the expectation of $N_n(\Sigma_n^c)$.  By Tonelli's theorem, independence, and Lemma~\ref{lem:num}, we have 
\begin{align*}
\PGammastar N_n(\Sigma_n^c) & = \int_{\Sigma_n^c} \PGammastar e^{-n\{R_n(\Gamma) - R_n(\Gamma^\star)\}} \,\Pi(d\Gamma) \\
& =\int_{\Sigma_n^c} \bigl\{ \PGammastar e^{-(\ell_\Gamma - \ell_{\Gamma^\star})} \}^n \,\Pi(d\Gamma) \\
& \leq \int_{\Sigma_n^c} \bigl\{ 1 - \rho \lambda(\Gamma \sdiff \Gamma^\star) \}^n \, \Pi(d\Gamma) \\
& \leq \Pi(\Sigma_n^c).
\end{align*}
By Lemma~\ref{lem:li.ghosal}, part~2, we have that $\Pi(\Sigma_n^c) \leq e^{-K n \eps_n}$.  

Next, we bound $\Pi_n(A_n \cap \Sigma_n)$.  Again, it will suffice to bound the expectation of $N_n(A_n \cap \Sigma_n)$.  Choose a covering $A_n \cap \Sigma_n$ by sup-norm balls $B_j=B_\infty(\gamma_j;\omega M_n\eps_n)$, $j=1,\ldots,J_n$, with centers $\gamma_j=\partial\Gamma_j$ in $A_n$ and radii $\omega M_n \eps_n$, where $\omega < \{\pi \diam(\Omega)\}^{-1}$ is as in Lemma~\ref{lem:lebesgue.L1}.  Also, from Lemma~\ref{lem:li.ghosal}, part~3, we have that $J_n$ is bounded by $e^{K_1n\eps_n}$ for some constant $K_1 > 0$.  For this covering, we immediately get 
\[ \PGammastar N_n(A_n \cap \Sigma_n) \leq \sum_{j=1}^{J_n} \PGammastar N_n(B_j). \]
For each $j$, using Tonelli, independence, and Lemma~\ref{lem:num} again, we get 
\[ \PGammastar N_n(B_j) = \int_{B_j} \{ \PGammastar e^{-(\ell_\Gamma - \ell_{\Gamma^\star})} \}^n \, \Pi(d\Gamma) \leq \int_{B_j} e^{-n \rho \lambda(\Gamma \sdiff \Gamma^\star)} \, \Pi(d\Gamma). \]
By Lemma~\ref{lem:lebesgue.L1}, for $\Gamma$ in $B_j$, since the center $\gamma_j$ is in $A_n$, it follows that $\lambda(\Gamma \sdiff \Gamma^\star)$ is lower bounded by $\eta M\eps_n$, where $\eta=\eta(\Gamma^\star)$ is given by 
\[ \eta = \frac{4}{\underline{\gamma}^\star} \Bigl( \frac{1}{\diam(\Omega)} - \pi \omega \Bigr) > 0.  \]
Therefore, from the bound on $J_n$, 
\[ \PGammastar N_n(A_n \cap \Sigma_n) \leq \sum_{j=1}^{J_n} \PGammastar N_n(B_j) \leq e^{- n \eta M \eps_n} \, J_n \leq e^{-(\eta M - K_1) n \eps_n}.  \]

Finally, we have 
\begin{align*}
\Pi_n(A_n) & \leq \Pi_n(A_n \cap \Sigma_n) + \Pi_n(\Sigma_n^c) \\
& = \frac{N_n(A_n \cap \Sigma_n)}{I_n} + \frac{N_n(\Sigma_n^c)}{I_n} \\
& \leq \frac{N_n(A_n \cap \Sigma_n)}{e^{-C_1n\eps_n}} \, 1(I_n > e^{-C_1n\eps_n}) + \frac{N_n(\Sigma_n^c)}{I_n} \, 1(I_n \leq e^{-C_1n\eps_n}) \\
& \leq \frac{N_n(A_n \cap \Sigma_n)}{e^{-C_1n\eps_n}} + 1(I_n \leq e^{-C_1n\eps_n}).
\end{align*}
Taking $\PGammastar$-expectation and plugging in the bounds derived above, we get 
\[ \PGammastar \Pi_n(A_n) \leq e^{-(\eta M - K_1-C_1)n\eps_n}. \]
If $M > (K_1 + C_1)/\eta$, then 
the upper bound vanishes, completing the proof.   


\bibliographystyle{apalike}
\bibliography{mybib_i}

\end{document}